\newcommand{\argument}[1]{\ensuremath{\mathcal{#1}}}
\newcommand{\base}[1]{\ensuremath{\mathscr{#1}}}
\newcommand{\calculus}[1]{\ensuremath{\mathsf{#1}}}
\newcommand{\set}[1]{\ensuremath{\mathbb{#1}}}
\newcommand{\setAtoms}{\set{A}}
\newcommand{\setFormulae}{\set{F}}
\newcommand{\basis}[1]{\mathfrak{#1}}
\newcommand{\sat}{\vDash}
\newcommand{\proves}{\vdash}
\newcommand{\entails}{\vDash}
\newcommand{\supp}{\Vdash}
\newcommand{\seq}{\triangleright}
\newcommand{\at}[1]{{\rm #1}}
\newcommand{\powerset}{\mathcal{P}}
\renewcommand{\phi}{\varphi}
\renewcommand{\emptyset}{\varnothing}
\newcommand*{\rn}[1]  {\ensuremath{\mathsf{#1}}}
\newcommand*{\irn}[2][]  {\rn{#2_{I#1}}}
\newcommand*{\ern}[2][]  {\rn{#2_{E#1}}}
\begin{document}

	\begin{abstract}
	     Proof-theoretic semantics (P-tS) is the paradigm of semantics in which meaning in logic is based on proof (as opposed to truth). A particular instance of P-tS for intuitionistic propositional logic (IPL) is its \emph{base-extension semantics} (B-eS). This semantics is given by a relation called support, explaining the meaning of the logical constants, which is parameterized by systems of rules called \emph{bases} that provide the semantics of atomic propositions. In this paper, we interpret bases as collections of definite formulae and use the operational view of the latter as provided by uniform proof-search --- the proof-theoretic foundation of logic programming (LP) --- to establish the completeness of IPL for the B-eS. This perspective allows negation, a subtle issue in P-tS, to be understood in terms of the \emph{negation-as-failure} protocol in LP. Specifically, while the denial of a proposition is traditionally understood as the assertion of its negation, in B-eS we may understand the denial of a proposition as the failure to find a proof of it. In this way, assertion and denial are both prime concepts in P-tS.
	    \end{abstract}
	    \begin{keywords}
	logic programming, proof-theoretic semantics, bilateralism, negation-as-failure
	\end{keywords}

\section{Introduction}

The definition of a system of logic may be given \emph{proof-theoretically} as a collection of rules of inference that, when composed, determine proofs; that is, formal constructions of arguments 
that establish that a conclusion is a consequence of some assumptions: 
\[
\frac{\mathrm{Established \; Premiss}_1 \quad \ldots \quad     \mathrm{Established \; Premiss}_k}{\mathrm{Conclusion}}{\big\Downarrow}
\]
The systematic use of symbolic and mathematical techniques to determine the forms of valid deductive argument defines \emph{deductive logic}: conclusions are inferred from assumptions. 

This is all very well as a way of defining what proofs are, but it relatively rarely reflects either how logic is used in practical reasoning problems or the method by which proofs are found. Rather, proofs are more often constructed by starting with a desired, or putative, conclusion and applying the rules of inference `backwards'. In this usage, the rules are sometimes called 
\emph{reduction operators}, read from conclusion to premisses, and denoted 
\[
\frac{\mathrm{Sufficient \; Premiss}_1 \quad \ldots \quad \mathrm{Sufficient \; 
    Premiss}_k}{\mathrm{Putative \; Conclusion}} {\big\Uparrow}  
\]
Constructions in a system of reduction operators are called \emph{reductions}. This paradigm is known as \emph{reductive logic}. The space of reductions of a putative conclusion is larger than its space of proofs, including also failed searches --- Pym and Ritter~\cite{pym2004reductive} have studied the reductive logic for intuitionistic and classical logic in which such objects are meaningful entities.

As one fixes more and more control structure relative to a set of reduction operators, which determining what reductions are made at what time, one increasingly delegates work to a machine. The extreme case is \emph{logic programming} (LP) in which such controls are fully specified. This view is, perhaps, somewhat obscured by the usual presentation of Horn-clause LP with SLD-resolution --- see, for example, Kowalski~\cite{Kowalski1986} and Lloyd~\cite{Lloyd1984} --- but it is explicit in work by Miller et al.~\cite{Miller1989,Miller1991}. What makes this work is that one restricts to the \emph{hereditary Harrop fragment} of a logic in which contexts contain only \emph{definite formulae} --- essentially, formulae in which disjunction only appears negatively. In LP, one typically thinks of the formulae in the context of a sequent as \emph{definional}, which underpins its use in symbolic artificial intelligence.

While deductive logic is suitable for considering the validity of propositions relative to sets of axioms, reductive logic is suitable for considering the meaning of propositions relative to \emph{systems of inference}. That the semantics of a statement is determined by its inferential behaviour is known as \emph{inferentialism} (see Brandom~\cite{Brandom2000}), which has a mathematical realization as \emph{proof-theoretic semantics} (P-tS).

In P-tS, the meaning of the logical connectives is usually derived from the rules of a natural deduction system for the logic --- for example, typically, one uses Gentzen's~\cite{Gentzen} \calculus{NJ} for intuitionistic logic. Meanwhile, the meanings of atomic propositions is supplied by an \emph{atomic system} --- a set of rules over atomic propositions. For example, taken from Sandqvist~\cite{Sandqvist2022wld}, the meaning of the proposition `Tammy is a vixen' can be understood as arising from the following rule:
\[
\infer{\mbox{Tammy is a vixen}}{\mbox{Tammy is a fox} & \mbox{Tammy is female}}
\]
 Sandqvist~\cite{Sandqvist2015IL} gave a P-tS for intuitionistic propositional logic (IPL) called \emph{base-extension semantics} (B-eS). It proceeds by a judgement called \emph{support}, parameterized by atomic systems, that defines the logical constants whose base case, the meaning of atoms, is given by derivability in an atomic system. 

There is an intuitive relationship between P-tS and LP: the way in bases are \emph{definitional} in P-tS is precisely how sets of definite formulae are \emph{definitional} in LP. Schroeder-Heister and Halln\"as~\cite{hallnas1990proof,hallnas1991proof} have used this relationship to address questions of \emph{harmony} and \emph{inversion} in P-tS. 

In this paper, we show that the completeness of IPL for the B-eS can be understood in terms of LP. The force of LP is the operational view of definite formulae, which generalize the rules in bases. Miller~\cite{Miller1989} gave this operational view of the hereditary Harrop fragment of IPL a proof-theoretic denotational semantics which proceeds by a least fixed point construction over the Herbrand base. A set of definite formulae parameterizes the construction. By thinking of this set as a base, we prove the completeness of IPL for the aforementioned B-eS by passing through the denotational semantics. 

This work exposes an interpretation of negation in P-tS as a manifestation of the \emph{negation-as-failure} (NAF) protocol.
The P-tS of negation is a subtle issue --- see, for example, K\"urbis~\cite{kurbis2019proof}. Meanwhile, in LP, the relationship between provability and refutation is made through NAF: a statement $\neg \phi$ is established precisely when the system fails to find a proof for $\phi$. The completeness argument for IPL in this paper shows that negation in B-eS can be understood in terms of the failure to find a proof. Hence, from the perspective of B-eS, it is not the case, as advanced by Frege~\cite{Frege1919} and endorsed by Dummett~\cite{dummett1991logical}, that denying a statement $\phi$ is equal to asserting the negation of $\phi$. Instead, denial in P-tS is conceptually prior to negation. In this way, through the lens of reductive logic, P-tS may be regarded as practising a form of \emph{bilateralism} --- the philosophical practice of giving equal consideration to dual concepts such as assertion and denial, truth and falsity, and so on. Of course, bilateralism with respect to negation in logic is a subject that received serious attention in the literature --- see, for example, Smiley~\cite{smiley1996rejection}, Rumfitt~\cite{rumfitt2000}, Francez~\cite{francez2014bilateralism}, Wansing~\cite{Wansing2016}, and K\"urbis~\cite{kurbis2019proof}.

The paper brings together the following fields: proof-theoretic semantics, reductive logic, and logic programming. Some such connexions have already been witnessed in the literature (see, for example, Schroeder-Heister and Halln\"as~\cite{hallnas1990proof,hallnas1991proof}). The value is that we can mutually use one to explicate phenomena in the other, such as understanding the meaning of negation in terms of NAF. That is not to argue in favour of NAF as an explanation of negation, but only that it manifests in the operational account of B-eS provided by the LP perspective.

The paper has three parts. In the first part (i.e., Section~\ref{sec:IL}), we give the relevant background on IPL: Section~\ref{sec:syntax} contains the syntax and terminology that we adopt for IPL; Section~\ref{sec:hHLP} defines the hereditary Harrop fragment (i.e., definite formulae) and gives their operational reading. In the second part (i.e., Section~\ref{sec:bes}), we summarize the B-eS for IPL as given by Sandqvist~\cite{Sandqvist2015IL}: in Section~\ref{sec:supp} we define the support relation giving the semantics, and in Section~\ref{sec:compviaBase} we summarize the existing proof of completeness. In the third part (i.e., Section~\ref{sec:LP}), we study B-eS from the perspective of the operational reading of definite formulae: Section~\ref{sec:atomicprograms} relates atomic systems and sets of definite formulae; Section~\ref{sec:compviaLP} proves completeness argument for IPL for the B-eS through the operational reading of definite formulae; and, Section~\ref{sec:NAF} discusses how this perspective manifests negation-as-failure as an explanation of the proof-theoretic meaning of negation. The paper concludes in Section~\ref{sec:conclusion} with a summary of our results and a discussion of future work.

\section{Intuitionistic Propositional Logic} \label{sec:IL}
\subsection{Syntax and Consequence} \label{sec:syntax}

There are various presentation of intuitionistic propositional logic (IPL) in the literature. We begin by fixing the relevant concepts and terminology used in this paper.

\begin{definition}[Formulae]
     Fix a (denumerable) set of atomic propositions \setAtoms. The set of formulae $\setFormulae$ (over \setAtoms) is constructed by the following grammar:
    \[
    \phi ::= \at{p} \in \setAtoms \mid \phi \lor \phi \mid \phi \land \phi \mid \phi \to \phi \mid \bot 
    \]
\end{definition}

\begin{definition}[Sequent]
     A sequent is a pair $\Gamma \seq\phi$ in which $\Gamma$ is a (countable) set of formulae and $\phi$ is a formula.
\end{definition}

We use $\proves$ as the consequence judgement relation defining IPL --- that is, $\Gamma \proves \phi$ denotes that the sequent $\Gamma \seq\phi$ is a consequence of IPL. We may write $\proves \phi$ to abbreviate $\emptyset \proves \phi$.

Throughout, we assume familiarity with the standard natural deduction system \calculus{NJ} for IPL as introduced by Gentzen~\cite{Gentzen} --- see, for example, van Dalen~\cite{vanDalen} and Troelstra and Schwichtenberg~\cite{troelstra2000basic}). Nonetheless we provide the relevant definitions in quick succession to keep the paper self-contained

\begin{definition}[Natural Deduction Argument]
    A natural deduction argument is a rooted tree of formulas in which some (possibly no) leaves are marked as discharged. An argument is open if it has undischarged assumptions; otherwise, it is closed.
\end{definition}

The leaves of an argument are its \emph{assumptions}, the root is its \emph{conclusion}. That $\argument{A}$ has open assumptions $\Gamma$, closed assumptions $\Delta$, and conclusion $\phi$ may be denoted as follows:
\[
\deduce{\phi}{\argument{A}} \qquad \deduce{\argument{A}}{\Gamma,[\Delta]} \qquad \deduce{\phi}{\deduce{\argument{A}}{\Gamma,[\Delta]}}
\]

\begin{definition}[Natural Deduction System \calculus{NJ}]
    The natural deduction system $\calculus{NJ}$ is composed of the rules in Figure~\ref{fig:nj}.
\end{definition}

\begin{figure}[t]
   \fbox{
   \begin{minipage}{0.95\linewidth}
   \centering
   $
       \infer[\irn{\land}]{\phi \land \psi}{\phi & \psi} \qquad \infer[\ern{\land^1}]{\phi}{\phi \land \psi} \quad \infer[\ern{\land^2}]{\psi}{\phi \land \psi}
    $
    \\[1.5ex]
    $
    \infer[\irn{\lor^1}]{\phi \lor \psi}{\phi} 
    \quad 
    \infer[\irn{\lor^2}]{\phi \lor \psi}{\psi} 
    \qquad
    \infer[\ern{\lor}]{\chi}{\phi \lor \psi & \deduce{\chi}{[\phi]} & \deduce{\chi}{[\psi]}}
     $
    \\[1.5ex]
    $
    \qquad \infer[\irn{\to}]{\phi \to \psi}{\deduce{\phi}{[\psi]}} 
    \qquad
    \infer[\ern{\to}]{\phi}{\phi & \phi \to \psi} 
    \qquad \infer[\ern{\bot}]{\phi}{\bot} 
    $
   \end{minipage}
   }
    \caption{Calculus $\calculus{NJ}$}
    \label{fig:nj}
\end{figure}

\begin{definition}[\calculus{NJ}-Derivation]
    The set of \calculus{NJ}-derivations is defined inductively as follows:
    \begin{itemize}
    \item[-]\textsc{Base Case.} If $\phi$ is a formula, then the one element tree $\phi$ is an \calculus{NJ}-derivation.
    
    \item[-] \textsc{Inductive Step.} Let $\rn{r}$ be a rule in $\calculus{NJ}$ and $\argument{D}_1,...,\argument{D}_n$ be a (possibly empty) list of \calculus{NJ}-derivations. If $\mathcal{D}$ is an argument arising from applying $\rn{r}$ to $\mathcal{D}_1,...,\mathcal{D}_n$, then $\mathcal{D}$ is an \calculus{NJ}-derivation.
    \end{itemize}
\end{definition}

 If $\mathcal{D}$ is an $\calculus{NJ}$-derivation with undischarged leaves composing the set $\Gamma$ and root $\phi$, then it is an argument for the sequent $\Gamma \seq \phi$. In this paper, we characterize IPL by $\calculus{NJ}$:
\[
\Gamma \proves \phi\qquad \text{iff} \qquad \mbox{there is an $\calculus{NJ}$-derivation for $\Gamma \seq \phi$}
\]

\subsection{The Hereditary Harrop Fragment} \label{sec:hHLP}

The hereditary Harrop fragment of IPL admits an operational reading that we use to deliver the completeness of a proof-theoretic semantics for IPL. This section closely follows work by Miller~\cite{Miller1989} (see also Harland~\cite{harland1991hereditary}).

The propositional hereditary Harrop formulae are generated by the following grammar in which $A \in \setAtoms$ is an atomic proposition, $D$ is a \emph{definite formula}, and $G$ is a \emph{goal} formula:
\[
\begin{array}{lcl}
     D & := & A \mid G \to A \mid D \land D  \\
     G & := & A \mid D \to G \mid G \land G \mid G \lor G
\end{array}
\]
A finite set of definite formulae $\base{P}$ is a \emph{program}; the set of all programs is $\set{P}$. We call a sequent $ \base{P} \seq G $, in which $\base{P}$ is a program and $G$ is a goal, a \emph{query}. 

The hereditary Harrop fragment of IPL admits an operational reading which renders it a logic programming language, here called hHLP. The operational semantics of hHLP is given by \emph{uniform} proof-search for $\base{P}\seq G$ in a sequent calculus for IPL --- see Miller et al.~\cite{Miller1991}.

For purely technical reasons, we require a decomposition function $[-]:\set{P} \to \set{P}$ that will unpack conjunctions. Let $[\base{P}]$ be the least set satisfying the following:
\begin{itemize}
    \item[-] $\base{P} \subseteq [\base{P}]$
    \item[-] If $D_1 \land D_2 \in [\base{P}]$, then
    $D_1 \in [\base{P}]$ and $D_2 \in [\base{P}]$.
\end{itemize}

\begin{definition}[Operational Semantics for hHLP]
    The operational semantics for hHLP is given by the clauses in Figure~\ref{fig:os}.
\end{definition}

\begin{figure}
    \centering
    \fbox{
        \begin{minipage}{0.95\linewidth}
              \[
              \begin{array}{lclr}
              \base{P} \proves A & \mbox{if} & A \in [\base{P}] & (\rn{IN})
              \\
              \base{P} \proves A & \mbox{if} & \mbox{$G \to A \in [\base{P}]$ and $\base{P} \proves G$} & (\rn{CLAUSE}) \\
              \base{P} \proves G & \mbox{if} & \base{P} \proves \bot & (\rn{EFQ}) \\
              \base{P} \proves G_1 \lor G_2 & \mbox{if} & \mbox{$\base{P} \proves G_1$ or $\base{P} \proves G_2$} & (\rn{OR}) \\
              \base{P} \proves G_1 \land G_2 & \mbox{if} & \mbox{$\base{P} \proves G_1$ and $ \base{P} \proves
              G_2$} & (\rn{AND}) \\
              \base{P} \proves D \to G & \mbox{if} & \mbox{$\base{P}\cup\{D\} \proves G$} &(\rn{LOAD}) \\
              \end{array}
              \]
        \end{minipage}
    }
    \caption{Operational Semantics for hHLP}
    \label{fig:os}
\end{figure}

Importantly, hHLP language is complete for the hereditary Harrop fragment of IPL; that is, $\base{P}\seq G$ has a successful execution iff it is a consequence of IPL --- see Miller~\cite{Miller1991}.

The standard frame semantics for IPL by Kripke~\cite{kripke1965semantical} forms a model-theoretic semantics for hHLP. However, the hereditary Harrop fragment is sufficiently restrictive that we may simplify the semantics in a useful way.

\begin{definition}[Interpretation]
 An interpretation is a mapping $I: \set{P} \to \powerset(\setAtoms)$ such that $\base{P} \subseteq \base{Q}$ implies $I(\base{P}) \subseteq I(\base{Q})$.
\end{definition}
 \begin{definition}[Satisfaction] \label{def:sat}
  The satisfaction judgement is given by the clauses of Figure~\ref{fig:satisfaction}.
 \end{definition}
\begin{figure}
    \centering
    \fbox{
        \begin{minipage}{0.95\linewidth}
              \[
              \begin{array}{lcl}
              I, \base{P} \sat A & \mbox{iff} & A \in I(\base{P})
              \\
              I, \base{P} \sat \bot & \mbox{iff} & \bot \in I(\base{P})
              \\
              I, \base{P} \sat G_1 \lor G_2 & \mbox{iff} & \mbox{$I, \base{P} \sat G_1$ or $I, \base{P} \sat G_2$}\\
              I, \base{P} \sat G_1 \land G_2 & \mbox{iff} & \mbox{$I, \base{P} \sat G_1$ and $I, \base{P} \sat G_2$}\\
              I, \base{P} \sat D \to G & \mbox{iff} & \mbox{$I, \base{P}\cup\{D\} \sat G$}\\
              \end{array}
              \]
        \end{minipage}
    }
    \caption{Denotational Semantics for hHLP}
    \label{fig:satisfaction}
\end{figure}

We desire a particular interpretation $J$ such that the following holds:
\[
J, \base{P} \entails G \qquad \text{ iff } \qquad \base{P} \proves G
\]

To this end, we consider a function $T$ from interpretations to interpretations that corresponds to unfolding derivability in a base:
\[
\begin{array}{lcl}
    T(I)(\base{P}) & := &  \{ A  \mid A \in [\base{P}] \} \, \cup \\
    & & \{ A \mid \mbox{$(G \to A) \in [\base{P}]$ and $I, \base{P} \entails G$} \} \, \cup \\
     & & \{A \mid I, \base{P} \entails \bot \}
\end{array}
\]

Interpretations form a lattice under point-wise union ($\sqcup$), point-wise intersection ($\sqcap$), and point-wise subset ($\sqsubseteq$); the bottom of the lattice is given by $I_\bot: \base{P} \mapsto \emptyset$. It is easy to see that $T$ is monotonic and continuous on this lattice, and, by the Knaster-Tarski Theorem~\cite{apt1982contributions}, its least fixed-point is given as follows:
\[
T^\omega I_\bot := I_\bot \sqcup T(I_\bot) \sqcup T^2 (I_\bot) \sqcup \hdots
\]
Intuitively, each application of $T$ concerns the application of a clause so that $T^\omega I_\bot$ corresponds to arbitrarily many applications. 

\begin{lemma} \label{lem:miller}
For any program $\base{P}$ and goal $G$,
\[
T^\omega I_\bot, \base{P} \entails G \qquad \text{ iff } \qquad  \base{P} \proves G
\]
\end{lemma}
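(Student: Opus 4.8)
The plan is to prove the two directions separately, exploiting that both sides of the biconditional are really inductive presentations of the same relation: Figure~\ref{fig:os} generates the operational relation $\base{P} \proves G$ as the least relation closed under the displayed clauses, while $T^\omega I_\bot$ is the least fixed point of $T$, whose defining clauses mirror the atomic ones. Throughout I would treat $\bot$ as a distinguished element of $\setAtoms$, so that $(\rn{IN})$, $(\rn{CLAUSE})$, $(\rn{EFQ})$ and the satisfaction clause for $\bot$ are uniform with the atomic case. I would also record two preliminary observations. First, the chain $I_\bot \sqsubseteq T(I_\bot) \sqsubseteq T^2(I_\bot) \sqsubseteq \cdots$ is increasing and $T^\omega I_\bot(\base{P}) = \bigcup_n T^n(I_\bot)(\base{P})$; this is immediate from monotonicity of $T$ and the definition of $\sqcup$. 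Second, satisfaction is monotone in the interpretation: if $I \sqsubseteq I'$ and $I, \base{P} \entails G$ then $I', \base{P} \entails G$, by a routine induction on $G$ (the subformula $D$ of a goal $D \to G$ is only moved into the program, never evaluated, so it causes no difficulty).

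For the right-to-left direction I would argue by induction on the derivation of $\base{P} \proves G$, using that $T^\omega I_\bot$ is a fixed point of $T$. For $(\rn{IN})$ and $(\rn{CLAUSE})$: since $T^\omega I_\bot(\base{P}) = T(T^\omega I_\bot)(\base{P})$, this set contains every $A \in [\base{P}]$, and also $A$ whenever $(G \to A) \in [\base{P}]$ and $T^\omega I_\bot, \base{P} \entails G$, the latter by the induction hypothesis on the premiss $\base{P} \proves G$. The cases $(\rn{OR})$, $(\rn{AND})$, $(\rn{LOAD})$ are immediate from the induction hypothesis and the matching clause of Figure~\ref{fig:satisfaction}. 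The case $(\rn{EFQ})$ needs a small sub-lemma: if $\bot \in T^\omega I_\bot(\base{P})$ then $T^\omega I_\bot, \base{P} \entails G$ for every goal $G$. This I would prove by induction on $G$: for atomic $G$ one uses the fixed-point property together with the third disjunct of the definition of $T$, which places every atom in $T(I)(\base{P})$ once $I, \base{P} \entails \bot$; for $G = D \to G'$ one uses monotonicity of the interpretation $T^\omega I_\bot$ to get $\bot \in T^\omega I_\bot(\base{P} \cup \{D\})$ and then applies the induction hypothesis at $\base{P} \cup \{D\}$; the remaining cases are trivial. Then $(\rn{EFQ})$ follows by applying the induction hypothesis to its premiss $\base{P} \proves \bot$ and invoking the sub-lemma.

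For the left-to-right direction I would first pass to finite approximants: by induction on $G$, $T^\omega I_\bot, \base{P} \entails G$ implies $T^n(I_\bot), \base{P} \entails G$ for some $n$ --- the atomic case (and $\bot$) uses $T^\omega I_\bot(\base{P}) = \bigcup_n T^n(I_\bot)(\base{P})$; the conjunction case takes the larger of the two indices and uses monotonicity of satisfaction along the chain; the disjunction and implication cases are direct. It then suffices to show, by induction on $n$, that $T^n(I_\bot), \base{P} \entails G$ implies $\base{P} \proves G$, for all $\base{P}$ and $G$. The base case $n = 0$ is vacuous, since $I_\bot$ satisfies no goal at any program (a one-line structural induction on $G$). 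For the step, suppose $T^{n+1}(I_\bot), \base{P} \entails G$, that is $T(T^n(I_\bot)), \base{P} \entails G$, and proceed by a subsidiary induction on the structure of $G$. If $G$ is atomic (including $\bot$), then $G \in T(T^n(I_\bot))(\base{P})$, so one of the three disjuncts of the definition of $T$ holds: the first gives $\base{P} \proves G$ by $(\rn{IN})$; the second supplies some $(G' \to G) \in [\base{P}]$ with $T^n(I_\bot), \base{P} \entails G'$, so $\base{P} \proves G'$ by the outer induction hypothesis and $\base{P} \proves G$ by $(\rn{CLAUSE})$; the third gives $T^n(I_\bot), \base{P} \entails \bot$, so $\base{P} \proves \bot$ by the outer induction hypothesis and $\base{P} \proves G$ by $(\rn{EFQ})$. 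The cases $G = G_1 \lor G_2$, $G = G_1 \land G_2$ and $G = D \to G'$ follow from the subsidiary induction hypothesis together with $(\rn{OR})$, $(\rn{AND})$ and $(\rn{LOAD})$ respectively.

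The main obstacle is the left-to-right direction, and specifically the need to stratify $T^\omega I_\bot$ into its finite approximants $T^n(I_\bot)$: a naive induction on the structure of $G$ fails at the atomic case, where membership $A \in T^\omega I_\bot(\base{P})$ may be witnessed by a clause $G' \to A$ with $G'$ of arbitrarily large complexity, so the induction must instead descend on the approximation index $n$ and handle the connectives by a nested structural induction. The interplay of $\bot$ with the $(\rn{EFQ})$ clause --- accommodated above by treating $\bot$ as a distinguished atom and, in the converse direction, by the sub-lemma --- is the other point that calls for a little care.
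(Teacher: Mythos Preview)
The paper does not actually prove this lemma: its entire proof is a citation to Miller~\cite{Miller1989} (with a pointer to Harland). So your proposal differs from the paper simply by supplying the argument in full rather than delegating it. Your proof is correct and is the standard one for this result: the right-to-left direction goes by induction on the operational derivation together with the fixed-point equation $T(T^\omega I_\bot) = T^\omega I_\bot$, and the left-to-right direction reduces to finite approximants $T^n(I_\bot)$ and then runs a lexicographic induction (outer on $n$, inner on the structure of $G$). Your handling of $\bot$ via the sub-lemma and your appeal to monotonicity of $T^\omega I_\bot$ in the program argument (which is legitimate, since $T$ preserves the class of interpretations) are both sound. What you gain over the paper is self-containment; what the paper gains is brevity and an explicit pointer to the original source.
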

\begin{proof}
 The result was proved by Miller~\cite{Miller1989} --- see also Harland~\cite{harland1991hereditary}. 
\end{proof}

\section{Base-extension Semantics} \label{sec:bes}

In this section, we give a brief, but complete, synopsis of the base-extension semantics (B-eS) for IPL as introduced by Sandqvist~\cite{Sandqvist2015IL}. The semantics proceeds through a \emph{support} relation parameterised by certain atomic systems, called \emph{bases}. We differ slightly in presentation from the previous work: first, we refer to more the possibility of more general definitions (e.g., considering $n$th level atomic systems for $n>2$); second, we make use of derivations as mathematical objects; third, we parameterize support over a notion of base called a \emph{basis}, a classes of atomic systems. These difference help bridge the gap between the earlier work and the connexions to logic programming in this paper. It also sets the B-eS for IPL within the wider literature of P-tS from which we draw the generalizations. 

\subsection{Support in a Base} \label{sec:supp}

A common idea in proof-theoretic semantics --- the paradigm of meaning in which B-eS operates --- is that the meaning of atomic propositions is given by sets of atomic rules governing their inferential behaviour.  Piecha and Schroeder-Heister~\cite{Schroeder2016atomic,piecha2017definitional} have given a useful inductive hierarchy of them.

\begin{definition}[Atomic Rule] \label{def:atomicrule}
 An $n$th-level atomic rule is defined as follows:
 \begin{itemize}
  \item[-] A zeroth-level atomic rule is a rule of the following form in which $\at{c} \in \setAtoms$:
 \[
 \infer{\,\at{c}\,}{}
 \]
 \item[-] A first-level atomic rule is a rule of the following form in which $\at{p}_1,...,\at{p}_n,\at{c} \in \setAtoms$,
 \[
 \infer{\,\at{c}\,}{\, \at{p}_1 & \hdots & \at{p}_n \,}
 \]
 \item[-] An $(n+1)$th-level atomic rule is a rule of the following form in which $\at{p}_1,...,\at{p}_n,\at{c} \in \setAtoms$ and $\Sigma_1,...,\Sigma_n$ are (possibly empty) sets of $n$th-level atomic rules:
  \[
 \infer{\,\at{c}\,}{ \, \deduce{\at{p}_1}{[\Sigma_1]} & \hdots & \deduce{\at{p}_n}{[\Sigma_n]} \,}
 \]
 \end{itemize}
\end{definition}

We take that premisses may be empty such that an $m$th-level atomic rule is an $n$th-level atomic rule for any $n>m$. Having sets of atomic rule as hypotheses is more general than have sets of atomic propositions as hypotheses; the latter is captured by the former by taking zeroth-order atomic rules. Nonetheless, the generalization is, perhaps, unexpected. We discuss it further in Section~\ref{sec:compviaLP}.

\begin{definition}[Atomic System]
    An atomic system is a set of atomic rules.
\end{definition}

Atomic systems may have infinitely many rules but they are at most countably infinite. They are used to base validity in P-tS on proof. The definition of a derivation is a generalization of natural deduction \emph{\`a la} Gentzen~\cite{Gentzen}, which was given by Piecha and Schroeder-Heister~\cite{Schroeder2016atomic,piecha2017definitional}. 

\begin{definition}[Derivation in an Atomic System]
 Let $\base{A}$ be an atomic system. The set of $\base{A}$-derivations is defined inductive as follows:
 \begin{itemize}
     \item[-] \textsc{Base Case}. If $\base{A}$ contains a zeroth-level rule concluding $\at{c}$, then the natural deduction argument consisting of just the node $\at{c}$ is a $\base{A}$-derivation.
     \item[-] \textsc{Induction Step}. Suppose $\base{A}$ contains an $(n+1)$th-level rule $\rn{r}$ of the following form:
     \[
 \infer{\,\at{c}\,}{ \, \deduce{\at{p}_1}{[\Sigma_1]} & \hdots & \deduce{\at{p}_n}{[\Sigma_n]} \,}
 \]
 And suppose that for each $1 \leq i \leq n$ there is a $\base{A}$-derivation $\argument{D}_i$ of the following form:
 \[
 \deduce{\at{p}_i}{\deduce{\mathcal{D}_i}{\Gamma_i, \Sigma_i}}
 \]
 Then the natural deduction argument with root $\at{c}$ and immediate sub-trees $\argument{D}_1$,...,$\argument{D}_n$ is a $\base{A}$-argument of $\at{c}$ from $\Gamma_1\cup...\cup \Gamma_n \cup \base{A}$.
 \end{itemize}
 An atom $\at{c}$ is derivable from $\Gamma$ in $\base{A}$ --- denoted $\Gamma \proves_\base{A} \at{c}$ --- iff there is a $\base{A}$-derivation of $\at{c}$ from $\Sigma \cup \base{A}$.
 \end{definition}

Typically, we do not consider all atomic systems, but restrict attention to some particular class.

\begin{definition}[Basis]
    A basis is a set of atomic systems.
\end{definition}

Having fixed a basis $\basis{B}$, an atomic system $\base{B} \in \basis{B}$ is called a \emph{base}. A base-extension semantics is formulated relative to a basis via a support relation.

\begin{definition}[Support in a Base] \label{def:bvalid}
 Fix a basis $\basis{B}$. Support over $\base{B}$ is the least relation $\supp_{-}$ on sequents  and bases in $\basis{B}$ defined by the clause of Figure~\ref{fig:support}.  The  validity judgement over $\basis{B}$ is the following relation $\supp$ one sequent: 
 \[
 \Gamma \supp \phi \qquad \mbox{iff} \qquad \mbox{$\Gamma \supp_\base{B} \phi$ for any $\base{B} \in \basis{B}$}
 \]
\end{definition}
\begin{figure}
    \centering
    \fbox{
    \begin{minipage}{0.95\linewidth}
          \[
        \begin{array}{lclr}
         \Gamma \supp_{\base{B}} \phi & \text{ iff } & \text{for any $\base{C} \in \basis{B}$ such that $\base{B} \subseteq \base{C}$,} & \text{($\Rightarrow$)} \\ 
         & & \text{if $  \supp_{\base{C}} \psi$ for all $\psi \in \Gamma$, then $  \supp_{\base{C}} \phi$ } &  \\ 
            \supp_{\base{B}} \at{p}  & \text{ iff } &   \proves_\base{B} \at{p}  & \text{($\setAtoms$)} \\
            \supp_{\base{B}} \phi \to \psi & \text{ iff } & \phi \supp_{\base{B}} \psi &  \text{($\to$)}\\
            \supp_{\base{B}} \phi \land \psi   & \text{ iff } &   \supp_{\base{B}} \phi \text{ and }   \supp_{\base{B}} \psi &  \text{($\land$)}  \\
            \supp_{\base{B}} \phi \lor \psi & \text{ iff } &  \text{for any $\base{C} \in \basis{B}$ such that $\base{B} \subseteq \base{C}$ and} &  \text{($\lor$)} \\ & & \mbox{any $\at{p} \in \setAtoms$, if $\phi \supp_\base{C} \at{p} \text{ and } \psi \supp_\base{C} \at{p}$, then $\supp_\base{C} \at{p}$} & \\
          \supp_{\base{B}} \bot & \text{iff} &    \supp_\base{B} \at{p} \text{ for any $\at{p} \in \setAtoms$} &  \text{($\bot$)} \\
        \end{array}
        \]
    \end{minipage}
    }
    \caption{Support in a Base}
    \label{fig:support}
\end{figure}

Observe that $\supp_\base{B} \phi$ coincides with $\emptyset \supp_\base{B} \phi$. Symmetrically, we write $ \supp \phi$ to denote $\emptyset \supp \phi$. 

Sandqvist~\cite{Sandqvist2005inferentialist} gave this semantics with a basis $\basis{S}$ consisting of atomic rules that are \emph{properly} second-level; that is, rules of the form
\[
\infer{\at{c}}{\deduce{\at{p}_1}{[\Sigma_1]} & \hdots & \deduce{\at{p}_n}{[\Sigma_n]} }
\]
in which $\Sigma_1$,...,$\Sigma_n$ are sets of atoms. 

\begin{theorem}[Soundness \& Completeness] \label{thm:sandqvist}
$\Gamma \proves \phi$ iff $\Gamma \supp \phi$ over $\basis{S}$.
\end{theorem}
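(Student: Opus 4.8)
The plan is to prove the two directions of Theorem~\ref{thm:sandqvist} separately, with soundness being the routine direction and completeness the one requiring the machinery of Sections~\ref{sec:hHLP}--\ref{sec:bes}. For soundness --- $\Gamma \proves \phi$ implies $\Gamma \supp \phi$ over $\basis{S}$ --- I would argue by induction on the structure of an $\calculus{NJ}$-derivation witnessing $\Gamma \proves \phi$. The key lemmas to establish first are the structural properties of $\supp_\base{B}$: monotonicity over base extension (if $\Gamma \supp_\base{B} \phi$ and $\base{B} \subseteq \base{C}$ then $\Gamma \supp_\base{C} \phi$), and a cut/transitivity principle matching the $(\Rightarrow)$ clause. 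One then checks that each rule of $\calculus{NJ}$ in Figure~\ref{fig:nj} preserves support; the $\ern{\lor}$ and $\ern{\bot}$ cases use the characteristic clauses $(\lor)$ and $(\bot)$, which are exactly engineered so that the elimination rules go through, and $\irn{\to}$ uses the $(\to)$ clause directly. This direction does not really need the logic-programming apparatus.

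For completeness --- $\Gamma \supp \phi$ over $\basis{S}$ implies $\Gamma \proves \phi$ --- the strategy advertised in the introduction is to route through the operational/denotational semantics of hHLP. The first step is to set up the correspondence of Section~\ref{sec:atomicprograms} between atomic systems (properly second-level, i.e.\ members of $\basis{S}$) and programs of definite formulae: a properly second-level atomic rule with conclusion $\at{c}$ and premiss-bundles $[\Sigma_i]$ maps to a definite formula (essentially a conjunction of clauses $G_i \to \at{c}$ where $G_i$ encodes $\Sigma_i \proves \at{p}_i$), and conversely. One must verify that $\proves_\base{B} \at{p}$ on the atomic-system side matches $\base{P} \proves \at{p}$ in the operational semantics of Figure~\ref{fig:os}, and hence, by Lemma~\ref{lem:miller}, matches $T^\omega I_\bot, \base{P} \entails \at{p}$ in the denotational semantics of Figure~\ref{fig:satisfaction}. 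The crucial move is then to exhibit, for each formula $\phi$ of IPL, a definite formula $\phi^\flat$ (or a finite program) together with a fresh atom, such that support $\supp_\base{B} \phi$ over $\basis{S}$ is captured by satisfaction of the translated goal in $T^\omega I_\bot$ over the corresponding program; implication and disjunction are the delicate clauses, and the clause $(\lot)$---sorry, $(\bot)$---for $\bot$ is precisely what the $\rn{EFQ}$ clause of Figure~\ref{fig:os} and the $\bot$-clause of Figure~\ref{fig:satisfaction} are built to mirror, which is where negation-as-failure enters.

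Concretely, for completeness I would take the standard ``flattening'' construction: assign to every subformula $\psi$ of $\Gamma \cup \{\phi\}$ a distinguished atom $r_\psi$, with $r_{\at{p}} = \at{p}$ for atoms, and let $\base{N}$ be the atomic system (equivalently, program) whose rules encode the decomposition of each $\psi$ according to its principal connective --- for instance $r_{\psi_1 \land \psi_2}$ is inter-derivable with the pair $r_{\psi_1}, r_{\psi_2}$; the clause for $r_{\psi_1 \to \psi_2}$ loads $r_{\psi_1}$ and concludes $r_{\psi_2}$; and so on. One proves by induction on $\psi$ a ``simulation'' lemma: for every base $\base{B} \supseteq \base{N}$ (in $\basis{S}$), $\supp_\base{B} \psi$ iff $\proves_\base{B} r_\psi$. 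Given this, from $\Gamma \supp \phi$ one instantiates the basis at $\base{N}$ itself, obtains $r_\phi$ derivable in $\base{N}$ from $\{r_\gamma : \gamma \in \Gamma\}$, reads this derivation operationally via Lemma~\ref{lem:miller}, and finally translates the resulting uniform proof back into an $\calculus{NJ}$-derivation of $\Gamma \seq \phi$ using the completeness of hHLP for the hereditary Harrop fragment. The main obstacle I anticipate is the simulation lemma at the implication and disjunction cases: the $(\to)$ and $(\lor)$ clauses of support quantify over all extensions $\base{C}$ of $\base{B}$ and over all atoms $\at{p}$, so proving that $r_\phi$'s derivability tracks support requires a careful choice of $\base{N}$ (closed under the relevant clauses) and an argument that testing against the single fresh atom $r_\psi$ suffices --- essentially a ``definitional reflection'' argument --- together with ensuring the whole construction stays inside the second-level basis $\basis{S}$ rather than drifting to higher levels.
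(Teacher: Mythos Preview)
Your proposal is essentially correct and tracks the paper's argument closely: soundness by checking that support admits each $\calculus{NJ}$-rule, and completeness via the flattening map $(-)^\flat$ and the bespoke base $\base{N}$ of Figure~\ref{fig:baseN}, with your ``simulation lemma'' ($\supp_\base{B}\psi$ iff $\proves_\base{B} r_\psi$ for $\base{B}\supseteq\base{N}$) amounting to the conjunction of Lemmas~\ref{lem:basicocmpleteness} and~\ref{lem:flatequivalence}.

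Two small corrections are worth flagging. First, once your simulation lemma gives you $\Gamma^\flat \proves_\base{N} \phi^\flat$ as atomic-system derivability, the detour through Lemma~\ref{lem:miller} is idle: you already have an operational derivation, and $T^\omega I_\bot$ adds nothing. (In the paper's own LP route, Miller is needed because the Emulation lemma lands in the denotational semantics $T^\omega I_\bot,\base{N}\entails\phi^\flat$ rather than in derivability; your route bypasses that.) Second, and more substantively, the final step --- from $\base{N}\cup\Gamma^\flat \proves \phi^\flat$ to $\Gamma \proves \phi$ --- is \emph{not} delivered by ``completeness of hHLP for the hereditary Harrop fragment'': that result only tells you $\base{N}\cup\Gamma^\flat \proves \phi^\flat$ holds in IPL, which is a statement about atoms and says nothing about $\phi$ itself. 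What you actually need is the sharpening step (Lemma~\ref{lem:sharpening}, equivalently Lemma~\ref{lem:simulation}): every rule of $\base{N}$ is by design the $\flat$-image of an $\calculus{NJ}$-rule, so applying $(-)^\natural$ node-by-node to the $\base{N}$-derivation yields an $\calculus{NJ}$-derivation of $\Gamma\seq\phi$. You gesture at this (``translates the resulting uniform proof back''), but the justification you name is the wrong one.
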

\begin{proof}
Proved by Sandqvist~\cite{Sandqvist2015IL} --- see Section~\ref{sec:compviaBase}.
\end{proof}

The support relation satisfies some important expected properties, such as the following:

\begin{lemma}\label{lem:extensionisconservative}
 If $\Gamma \supp_{\base{B}} \phi$ and $\base{C} \supseteq \base{B}$, then  $\Gamma \supp_{\base{C}} \phi$.
\end{lemma}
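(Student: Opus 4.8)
The plan is to prove Lemma~\ref{lem:extensionisconservative} --- monotonicity of support under base extension --- by induction on the structure of the formula $\phi$, generalising over the set $\Gamma$ on the left-hand side. The statement to establish is: for every $\phi$, every $\Gamma$, every base $\base{B}\in\basis{B}$, and every $\base{C}\in\basis{B}$ with $\base{B}\subseteq\base{C}$, if $\Gamma\supp_\base{B}\phi$ then $\Gamma\supp_\base{C}\phi$.

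First I would handle the case where $\Gamma$ is non-empty by a quick reduction: by the $(\Rightarrow)$ clause of Figure~\ref{fig:support}, $\Gamma\supp_\base{B}\phi$ says that for every $\base{D}\supseteq\base{B}$ in $\basis{B}$, if $\supp_\base{D}\psi$ for all $\psi\in\Gamma$ then $\supp_\base{D}\phi$. To show $\Gamma\supp_\base{C}\phi$, take any $\base{D}\supseteq\base{C}$ in $\basis{B}$ with $\supp_\base{D}\psi$ for all $\psi\in\Gamma$; since $\base{B}\subseteq\base{C}\subseteq\base{D}$, the hypothesis applied to $\base{D}$ yields $\supp_\base{D}\phi$, as required. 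Note this argument only uses transitivity of $\subseteq$ and does not even need the induction; so the real content is the case $\Gamma=\emptyset$, i.e. showing $\supp_\base{B}\phi$ implies $\supp_\base{C}\phi$, which I prove by induction on $\phi$.

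For the atomic case $\phi=\at{p}$: $\supp_\base{B}\at{p}$ means $\proves_\base{B}\at{p}$, and since $\base{B}\subseteq\base{C}$ any $\base{B}$-derivation is a $\base{C}$-derivation (the notion of $\base{A}$-derivation is monotone in $\base{A}$ --- more rules are available, and open assumptions only shrink), so $\proves_\base{C}\at{p}$, hence $\supp_\base{C}\at{p}$. For $\phi=\bot$: $\supp_\base{B}\bot$ gives $\supp_\base{B}\at{q}$ for all atoms $\at{q}$, and by the atomic case each lifts to $\supp_\base{C}\at{q}$, so $\supp_\base{C}\bot$. For $\phi=\psi_1\land\psi_2$: unfold the $(\land)$ clause and apply the induction hypothesis to each conjunct. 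For $\phi=\psi_1\to\psi_2$: $\supp_\base{B}\psi_1\to\psi_2$ is $\psi_1\supp_\base{B}\psi_2$, which by the non-empty-$\Gamma$ reduction above (with $\Gamma=\{\psi_1\}$) gives $\psi_1\supp_\base{C}\psi_2$, i.e. $\supp_\base{C}\psi_1\to\psi_2$. For $\phi=\psi_1\lor\psi_2$: unfold the $(\lor)$ clause; $\supp_\base{B}\psi_1\lor\psi_2$ quantifies over all $\base{D}\supseteq\base{B}$, and to prove $\supp_\base{C}\psi_1\lor\psi_2$ we quantify over $\base{D}\supseteq\base{C}$, which is a sub-collection, so the claim follows directly without even invoking the induction hypothesis.

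The argument is almost entirely routine unfolding; the one genuinely load-bearing observation --- and the step I would flag as the crux --- is that the definitional clauses for $\to$ and $\lor$ (and the antecedent part of $(\Rightarrow)$) are themselves phrased as quantifications over \emph{extensions}, so monotonicity for these connectives is immediate from transitivity of $\subseteq$ rather than from the inductive hypothesis, whereas $\land$ and $\bot$ genuinely need the inductive hypothesis on sub-formulae and on atoms respectively. In particular one must be slightly careful that the induction is set up on formula structure with $\Gamma$ universally quantified (equivalently: first reduce to $\Gamma=\emptyset$), since the $\to$ case feeds a singleton context back in; doing the reduction to $\Gamma=\emptyset$ up front, as above, keeps the induction clean. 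No appeal to soundness or completeness (Theorem~\ref{thm:sandqvist}) is needed --- this is a structural fact about the support relation itself.
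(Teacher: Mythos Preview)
Your argument is correct and matches the paper's approach: the paper simply defers to Sandqvist~\cite{Sandqvist2015IL} with the remark that the proof is ``by induction on support in a base,'' which is exactly the structural induction you spell out. Your additional observation --- that the $(\Rightarrow)$, $(\to)$, and $(\lor)$ clauses already quantify over extensions so that transitivity of $\subseteq$ does the work there, while only the atomic, $\bot$, and $\land$ cases genuinely invoke the inductive hypothesis --- is accurate and a useful clarification beyond what the paper records.
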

\begin{proof}
Proved by Sandqvist~\cite{Sandqvist2015IL} by induction on support in a base.
\end{proof}

There are related base-extension semantics for classical logic --- see Sandqvist~\cite{Sandqvist2005inferentialist,Sandqvist2009CL} and Makinson~\cite{makinson2014inferential}. 

This summarizes the B-eS for IPL. In the next section we present the completeness proof as provided by Sandqvist~\cite{Sandqvist2015IL} as it will be useful to understand the connections to reductive logic later on.

\subsection{Completeness of IPL via a Natural Base} \label{sec:compviaBase}

Sandqvist~\cite{Sandqvist2015IL} proved the soundness of IPL for the B-eS by showing that validity admits all the rules of \calculus{NJ}. His proof of completeness is more complex. In essence, Sandqvist~\cite{Sandqvist2015IL} proved completeness of IPL for the B-eS by constructing a bespoke atomic system $\base{N}$ to a given validity judgement that allows us to \emph{simulate} an \calculus{NJ}-derivation for the sequent in question. We present the main ideas here as we refer to them in Section~\ref{sec:compviaLP}.

We want to show that if $\Gamma \supp \gamma$ obtains, then there is an $\calculus{NJ}$-proof witnessing $\Gamma \proves \gamma$. To this end, we associate to each formula $\phi$ in the sequent $\Gamma \seq\gamma$ a unique atom $r$ and construct a base $\base{N}$ emulating \calculus{NJ} such that $r$ behaves in $\base{N}$ as $\phi$ behaves in $\calculus{NJ}$. For example, let $\Gamma \seq\gamma$ contain $\phi:=\at{p} \land \at{q}$. The rules governing $\phi$ are the conjunction introduction and elimination rules of $\calculus{NJ}$, so we require \base{N} to contain the following rules in which $\at{r}$ is alien to $\Gamma \seq\gamma$:
\[
\infer{\,\at{r}\,}{\,\at{p} & \at{q} \,} \qquad \infer{\,\at{p}\,}{\,\at{r}\,} \quad \infer{\,\at{q}\,}{\,\at{r}\,}
\]
These rules are designed such that $\at{r}$ behaves in $\base{N}$ precisely as $\phi$ does in $\calculus{NJ}$; that is, they emulate the conjunction rules. The shorthand for $\at{r}$ is $(\at{p} \land \at{q})^\flat$ --- that is $r=\phi^\flat$ --- so that the above rules may be expressed more clearly as follows:
\[
\infer{\,(\at{p}\land\at{q})^\flat\,}{\,\at{p} & \at{q}\,} \qquad \infer{\,\at{p}\,}{\,(\at{p}\land\at{q})^\flat\,} \quad \infer{\,\at{q}\,}{\,(\at{p}\land\at{q})^\flat\,}
\]
For clarity, we give another example. Suppose $\Gamma \seq\gamma$ also contains $\psi := \at{p} \to \at{q}$, then $\base{N}$ contains rules that emulate the implication introduction and elimination rules of $\calculus{NJ}$ for $\psi$ using an atom $\psi^\flat = (\at{p} \to \at{q})^\flat$ alien to $\Gamma$ and $\gamma$. That is, $\base{N}$ contains the following rules:
\[
\infer{(\at{p} \to \at{q})^\flat}{\deduce{\at{q}}{[\at{p}]}} \qquad 
\infer{\at{q}}{\at{p} & (\at{p}\to\at{q})^\flat} 
\]
The details of how $\base{N}$ is constructed and how it delivers completeness are below. 

Given $\Gamma \supp \gamma$, to every formula $\phi$ occurring in $\Gamma \seq\gamma$ associate a unique atomic proposition $\phi^\flat$ as follows:
\begin{itemize}
    \item[-] if $\phi \not \in \setAtoms$, then $\phi^\flat$ is an atom that does not occur in $\Gamma \seq\gamma$;
    \item[-] if $\phi \in \setAtoms$, then $\phi^\flat = \phi$.
\end{itemize} 
The right-inverse of $-^\flat$ is $-^\natural$ and both functions act on sets point-wise,
\[
\Sigma^\flat := \{\phi^\flat \mid \phi \in \Sigma \} \qquad \Sigma^\natural := \{\phi^\natural \mid \phi \in \Sigma \}
\]

Let $\base{N}$ be the atomic system containing precisely the rules of Figure~\ref{fig:baseN} for any $\phi$, $\psi$, and $\chi$ occurring in $\Gamma \seq\gamma$. These rules are precisely such that $\phi^\flat$ behaves in $\base{N}$ as $\phi$ does in \calculus{NJ}. Note that, for any validity judgement, the atomic system $\base{N}$ thus generated is indeed a Sandqvist base; moreover, it is a finite set.

\begin{figure}[t]
   \fbox{
   \begin{minipage}{0.95\linewidth}
   \centering
   \vspace{0.5em}
   $
       \infer[\irn{\land}^\flat]{(\phi \land \psi)^\flat}{\phi^\flat & \psi^\flat} \qquad \infer[\ern{\land}^\flat]{\phi^\flat}{(\phi \land \psi)^\flat} \quad \infer[\ern{\land}^\flat]{\psi^\flat}{(\phi \land \psi)^\flat}
    $
    \\[1.5ex]
    $
    \infer[\irn{\lor}^\flat]{(\phi \lor \psi)^\flat}{\phi^\flat} 
    \quad 
    \infer[\irn{\lor}^\flat]{(\phi \lor \psi)^\flat}{\psi^\flat} 
    \qquad
    \infer[\ern{\lor}^\flat]{\chi^\flat}{(\phi \lor \psi)^\flat & \deduce{\chi^\flat}{[\phi^\flat]} & \deduce{\chi^\flat}{[\psi^\flat]}}
    $
     \\[1.5ex]
    $
     \infer[\irn{\to}^\flat]{(\phi \to \psi)^\flat}{\deduce{\psi^\flat}{[\phi^\flat]}}
     \qquad 
         \infer[\ern{\to}^\flat]{\psi^\flat}{\phi^\flat & (\phi \to \psi)^\flat} 
         \qquad 
           \infer[\ern{\bot}^\flat]{\phi^\flat}{\bot^\flat}
    $
    \vspace{0.5em}
   \end{minipage}
   }
    \caption{Atomic System $\base{N}$}
    \label{fig:baseN}
\end{figure}

In this set-up, Sandqvist~\cite{Sandqvist2015IL} establishes three properties that collectively deliver completeness.

\begin{lemma}\label{lem:basicocmpleteness}
    Let $\Sigma \subseteq \mathbb{A}$ and $\at{p} \in \mathbb{A}$ and let $\base{B} \in \basis{S}$,
    \[
    \Sigma \supp_\base{B} \at{p} \qquad \text{ iff } \qquad \Sigma \proves_\base{B} \at{p}
    \]
\end{lemma}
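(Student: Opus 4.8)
The plan is to prove the two implications separately, in each case trading support of an atom for derivability of that atom --- legitimate by the clause $(\setAtoms)$, which gives $\supp_\base{C} \at{q}$ iff $\proves_\base{C} \at{q}$ --- and then reasoning purely about $\base{B}$-derivations. The only facts about atomic derivations I will use are: (i) \emph{persistence}, that if $\base{B} \subseteq \base{C}$ then every $\base{B}$-derivation is a $\base{C}$-derivation; and (ii) \emph{composition}, that grafting a closed $\base{B}$-derivation of $\at{q}$ onto an undischarged leaf labelled $\at{q}$ in another $\base{B}$-derivation again yields a $\base{B}$-derivation. Both are the natural-deduction analogues of weakening and cut and fall straight out of the inductive definition of $\base{B}$-derivation.

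I would first do the soundness direction, $\Sigma \proves_\base{B} \at{p} \Rightarrow \Sigma \supp_\base{B} \at{p}$. Unfolding $(\Rightarrow)$, fix $\base{C} \in \basis{S}$ with $\base{B} \subseteq \base{C}$ and assume $\supp_\base{C} \at{q}$, i.e.\ $\proves_\base{C} \at{q}$, for every $\at{q} \in \Sigma$; the goal is $\supp_\base{C} \at{p}$, i.e.\ $\proves_\base{C} \at{p}$. By persistence a witnessing $\base{B}$-derivation of $\at{p}$ from $\Sigma$ is also a $\base{C}$-derivation, and by composition I graft onto each of its undischarged leaves the corresponding closed $\base{C}$-derivation of $\at{q}$, obtaining a closed $\base{C}$-derivation of $\at{p}$. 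Hence $\proves_\base{C} \at{p}$, as required.

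Next the completeness direction, $\Sigma \supp_\base{B} \at{p} \Rightarrow \Sigma \proves_\base{B} \at{p}$. The key move is to build a base that forces exactly the atoms of $\Sigma$: let $\base{C}$ be $\base{B}$ together with, for each $\at{q} \in \Sigma$, the zeroth-level rule with conclusion $\at{q}$ and no premisses. Since a zeroth-level rule counts as an $n$th-level rule for every $n$, and in particular as a properly second-level rule, $\base{C}$ is again a member of $\basis{S}$, and evidently $\base{B} \subseteq \base{C}$. Each $\at{q} \in \Sigma$ now has a one-node $\base{C}$-derivation, so $\supp_\base{C} \at{q}$ for all $\at{q} \in \Sigma$; instantiating the clause $(\Rightarrow)$ for $\Sigma \supp_\base{B} \at{p}$ at the extension $\base{C}$ gives $\supp_\base{C} \at{p}$, i.e.\ $\proves_\base{C} \at{p}$. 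Finally, take any closed $\base{C}$-derivation of $\at{p}$; being a finite tree it invokes only finitely many of the new axioms, and re-reading each leaf introduced by one of them as an undischarged assumption turns it into a $\base{B}$-derivation of $\at{p}$ all of whose undischarged assumptions lie in $\Sigma$. Thus $\Sigma \proves_\base{B} \at{p}$.

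The proof is mostly bookkeeping; the one genuinely load-bearing idea is the base construction $\base{C}$ in the completeness direction. The only point that needs real care is checking that adjoining axioms keeps us inside the basis $\basis{S}$ --- which is precisely why the hierarchy of atomic rules is set up so that lower-level rules count as higher-level ones --- together with the observation that, since derivations are finite, a possibly infinite $\Sigma$ causes no trouble when passing back from $\base{C}$ to $\base{B}$. I expect this to be close to Sandqvist's own argument.
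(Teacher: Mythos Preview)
The paper does not actually supply a proof of this lemma: it is stated, glossed as ``a basic completeness result,'' and attributed to Sandqvist~\cite{Sandqvist2015IL}. So there is no in-paper argument to compare against; what you have written is essentially the standard proof one finds in Sandqvist's work, and it is correct.

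A few remarks on points you flagged yourself. First, your worry about whether adjoining zeroth-level axioms keeps $\base{C}$ inside $\basis{S}$ is handled exactly as you say: the paper stipulates that premisses may be empty, so an $m$th-level rule is an $n$th-level rule for $n>m$, and indeed the paper's own base $\base{N}$ already contains first- and zeroth-level rules. Second, the passage back from $\proves_\base{C} \at{p}$ to $\Sigma \proves_\base{B} \at{p}$ is slightly more delicate than ``re-reading axiom leaves as open assumptions'' in the presence of second-level discharge: a leaf $\at{q}$ introduced by one of the new axioms might sit under a rule that discharges $\at{q}$, in which case it simply becomes a discharged assumption rather than an open one. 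Either way the remaining open assumptions lie in $\Sigma$, so the conclusion stands; it is just worth saying explicitly. Finally, the paper's inductive definition of $\base{A}$-derivation omits the expected base clause ``a single atom is a derivation with that atom as open assumption,'' but the intended reading is clear from the inductive step and from the definition of $\Gamma \proves_\base{A} \at{c}$, and you are right to work with it.
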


This claim is a basic completeness result in which the context $\Sigma$ is restricted to a set of atomic propositions and the extract $\at{p}$ is an atomic proposition. 

\begin{lemma}\label{lem:flatequivalence}
    For every $\phi$ occurring in $\Gamma \seq\gamma$ and any $\base{N}' \supseteq \base{N}$,
    \[
     \supp_{\base{N}'} \phi^\flat \qquad  \text{ iff } \qquad  \supp_{\base{N}'} \phi
    \]
\end{lemma}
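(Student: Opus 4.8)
The plan is to prove the statement by induction on the structure of $\phi$, following Sandqvist~\cite{Sandqvist2015IL}. The base case is $\phi \in \setAtoms$: then $\phi^\flat = \phi$ by definition, so the two sides of the biconditional are the same judgement. In the inductive step one splits on the principal connective of $\phi$. The argument is uniform across the cases: unfold the clause of Figure~\ref{fig:support} for that connective, apply the induction hypothesis to the immediate subformulas of $\phi$ --- which occur in $\Gamma \seq \gamma$ whenever $\phi$ does, and at each base extending $\base{N}$ --- and then use the rules of $\base{N}$ from Figure~\ref{fig:baseN} that govern $\phi^\flat$, together with Lemma~\ref{lem:basicocmpleteness}, to pass between support of an atom and derivability of that atom. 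Lemma~\ref{lem:extensionisconservative} is used throughout to relocate a support judgement into a larger base before invoking the induction hypothesis.

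The cases $\phi = \psi \land \chi$ and $\phi = \psi \to \chi$ are essentially bookkeeping. For conjunction, clause $(\land)$ gives $\supp_{\base{N}'} \psi \land \chi$ iff $\supp_{\base{N}'} \psi$ and $\supp_{\base{N}'} \chi$; the induction hypothesis rewrites this as $\supp_{\base{N}'} \psi^\flat$ and $\supp_{\base{N}'} \chi^\flat$; Lemma~\ref{lem:basicocmpleteness} as $\proves_{\base{N}'} \psi^\flat$ and $\proves_{\base{N}'} \chi^\flat$; and, since $\irn{\land}^\flat$ and the two $\ern{\land}^\flat$ rules lie in $\base{N} \subseteq \base{N}'$, this is equivalent to $\proves_{\base{N}'} (\psi \land \chi)^\flat$, hence to $\supp_{\base{N}'} (\psi \land \chi)^\flat$. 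For implication, clause $(\to)$ gives $\supp_{\base{N}'} \psi \to \chi$ iff $\psi \supp_{\base{N}'} \chi$; unfolding $(\Rightarrow)$ and applying the induction hypothesis to $\psi$ and $\chi$ at each $\base{C} \supseteq \base{N}'$ reduces this to $\psi^\flat \supp_{\base{N}'} \chi^\flat$, i.e.\ by Lemma~\ref{lem:basicocmpleteness} to $\psi^\flat \proves_{\base{N}'} \chi^\flat$, which the rules $\irn{\to}^\flat$ and $\ern{\to}^\flat$ convert to and from $\proves_{\base{N}'} (\psi \to \chi)^\flat$; here Lemma~\ref{lem:extensionisconservative} is what lets us push $\psi \supp_{\base{N}'} \chi$ into an arbitrary extension before decoding.

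The crux is $\phi = \psi \lor \chi$, where the two directions behave differently. The direction $\supp_{\base{N}'} \psi \lor \chi \Rightarrow \supp_{\base{N}'} (\psi \lor \chi)^\flat$ is easy: instantiate clause $(\lor)$ at the base $\base{N}'$ and the atom $\at{p} := (\psi \lor \chi)^\flat$, so that it suffices to check $\psi^\flat \supp_{\base{N}'} (\psi \lor \chi)^\flat$ and $\chi^\flat \supp_{\base{N}'} (\psi \lor \chi)^\flat$, both of which follow from the $\irn{\lor}^\flat$ rules, the induction hypothesis, and Lemma~\ref{lem:basicocmpleteness}. The converse, $\supp_{\base{N}'} (\psi \lor \chi)^\flat \Rightarrow \supp_{\base{N}'} \psi \lor \chi$, is where the work is and, I expect, the main obstacle. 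Given $\base{C} \supseteq \base{N}'$ and an atom $\at{p}$ with $\psi \supp_{\base{C}} \at{p}$ and $\chi \supp_{\base{C}} \at{p}$, one must establish $\supp_{\base{C}} \at{p}$; unfolding $(\Rightarrow)$, applying the induction hypothesis to $\psi$ and $\chi$ over $\base{C}$ and its extensions, and using Lemma~\ref{lem:basicocmpleteness} turns the hypotheses into $\psi^\flat \proves_{\base{C}} \at{p}$ and $\chi^\flat \proves_{\base{C}} \at{p}$, while $\proves_{\base{C}} (\psi \lor \chi)^\flat$ holds by Lemma~\ref{lem:extensionisconservative}. It remains to splice these into a closed $\base{C}$-derivation of $\at{p}$, which is precisely what an $\ern{\lor}^\flat$ inference with conclusion $\at{p}$ --- applied to $\proves_{\base{C}} (\psi \lor \chi)^\flat$, $\psi^\flat \proves_{\base{C}} \at{p}$, and $\chi^\flat \proves_{\base{C}} \at{p}$ --- provides. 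The delicate point is that clause $(\lor)$ quantifies over all atoms whereas $\base{N}$ contains $\ern{\lor}^\flat$ instances only for the $\chi'$ occurring in $\Gamma \seq \gamma$ (equivalently, for the flat atoms), so one must ensure these suffice for an arbitrary $\at{p}$; reconciling the unbounded quantification in the support clause with the finiteness of $\base{N}$ is the point that requires care. Finally, $\phi = \bot$ is handled in the same spirit: clause $(\bot)$ again quantifies over all atoms; one direction is immediate from clause $(\bot)$ instantiated at $\bot^\flat$, and for the other one reads off $\supp_{\base{N}'} \at{p}$ from $\proves_{\base{N}'} \bot^\flat$ using the $\ern{\bot}^\flat$ rules.
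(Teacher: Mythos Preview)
The paper does not give its own proof of this lemma; it is stated in Section~\ref{sec:compviaBase} as one of three facts drawn from Sandqvist~\cite{Sandqvist2015IL}, so there is no in-paper argument to compare against. Your proposal is precisely Sandqvist's argument --- induction on the build-up of $\phi$, converting each support clause via the induction hypothesis and Lemma~\ref{lem:basicocmpleteness} into a derivability statement that the corresponding $^\flat$-rules in $\base{N}$ then match --- and the case analysis you give is correct.

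The concern you flag in the $\lor$ and $\bot$ cases is legitimate, but it points to a slip in this paper's description of $\base{N}$ rather than to a gap in your proof. In Sandqvist's construction the $\ern{\lor}^\flat$ and $\ern{\bot}^\flat$ rules are instantiated with an \emph{arbitrary} atom as conclusion, not only with $\chi^\flat$ for $\chi$ occurring in $\Gamma\seq\gamma$; with that base the $\ern{\lor}^\flat$-splice you describe --- combining $\proves_{\base{C}}(\psi\lor\chi)^\flat$ with $\psi^\flat\proves_{\base{C}}\at{p}$ and $\chi^\flat\proves_{\base{C}}\at{p}$ --- is available for every $\at{p}$, and your argument goes through without further work. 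The paper's remark that $\base{N}$ is a finite set is then inaccurate (it is countably infinite but still a base), and indeed the restricted $\base{N}$ described here would falsify the lemma: for $\Gamma\seq\gamma = \bot\seq\bot$ the only rule in $\base{N}$ is the vacuous $\bot^\flat/\bot^\flat$, so taking $\base{N}' = \base{N}\cup\{\bot^\flat\}$ one has $\supp_{\base{N}'}\bot^\flat$ yet $\not\supp_{\base{N}'}\at{p}$ for any fresh $\at{p}$, hence $\not\supp_{\base{N}'}\bot$. So you should simply take $\base{N}$ with the unrestricted elimination rules, note that the ``delicate point'' then dissolves, and drop the worry.
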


In other words, $\phi^\flat$ and $\phi$ are equivalent in $\base{N}$ --- that is, $\phi^\flat \supp_\base{N} \phi$ and $\phi \supp_\base{N} \phi^\flat$. The property allows us to move between the basic case (i.e., the set-up of Lemma~\ref{lem:basicocmpleteness}) and the general case (i.e., completeness --- Theorem~\ref{thm:sandqvist}). This is the crucial step in the proof of completeness. In Section~\ref{sec:compviaLP}, we study it in terms of the operational account of definite formulae given in Section~\ref{sec:hHLP}.

\begin{lemma} \label{lem:sharpening}
    Let $\Sigma \subseteq \mathbb{A}$ and $\at{p} \in \mathbb{A}$,
    \[
    \Sigma \supp_\base{N} p \text{ implies } \Sigma^\natural \proves p^\natural
    \]
\end{lemma}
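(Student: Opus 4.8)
The plan is to prove Lemma~\ref{lem:sharpening} by induction on the structure of the derivation witnessing $\Sigma \supp_\base{N} \at{p}$, but since $\supp_\base{N}$ for atoms reduces (by the $(\setAtoms)$ clause together with the $(\Rightarrow)$ clause) to derivability $\Sigma \proves_\base{N} \at{p}$, the natural object to induct on is the $\base{N}$-derivation of $\at{p}$ from $\Sigma \cup \base{N}$. The key observation is that every rule in $\base{N}$ (Figure~\ref{fig:baseN}) is the $-^\flat$-image of a rule of $\calculus{NJ}$ (Figure~\ref{fig:nj}): for instance $\irn{\land}^\flat$ has premisses $\phi^\flat, \psi^\flat$ and conclusion $(\phi\land\psi)^\flat$, and applying $-^\natural$ gives exactly $\irn{\land}$ with premisses $\phi,\psi$ and conclusion $\phi\land\psi$ (using that $-^\natural$ is a left inverse of $-^\flat$ and that $-^\natural$ commutes with the connectives on the relevant formulae, since $(\phi\land\psi)^\natural = \phi\land\psi = \phi^\natural \land \psi^\natural$ for formulae occurring in $\Gamma\seq\gamma$). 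So the plan is: first establish Lemma~\ref{lem:basicocmpleteness}-style that $\Sigma\supp_\base{N}\at{p}$ iff $\Sigma\proves_\base{N}\at{p}$ (this is Lemma~\ref{lem:basicocmpleteness} specialised to $\base{B}=\base{N}$), and then translate the $\base{N}$-derivation into an $\calculus{NJ}$-derivation by applying $-^\natural$ node-wise.

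Concretely, I would argue as follows. Suppose $\Sigma\supp_\base{N}\at{p}$; by Lemma~\ref{lem:basicocmpleteness} there is an $\base{N}$-derivation $\mathcal{D}$ of $\at{p}$ from $\Sigma\cup\base{N}$ (noting $\base{N}\in\basis{S}$, which the text already remarks). Define $\mathcal{D}^\natural$ by replacing every atom $\at{q}$ occurring in $\mathcal{D}$ by $\at{q}^\natural$ and every application of an $\base{N}$-rule by the corresponding $\calculus{NJ}$-rule (including discharge markings, which are preserved since the flat-rules carry the same discharge structure as their \calculus{NJ}\ counterparts — e.g. $\irn{\to}^\flat$ discharges $\phi^\flat$ exactly as $\irn{\to}$ discharges $\phi$). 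One checks by induction on $\mathcal{D}$ that $\mathcal{D}^\natural$ is a well-formed $\calculus{NJ}$-derivation whose conclusion is $\at{p}^\natural$ and whose undischarged leaves are contained in $\Sigma^\natural$ (the leaves of $\mathcal{D}$ are among $\Sigma$ together with atoms discharged by flat-rules, and $-^\natural$ sends $\Sigma$ to $\Sigma^\natural$). Hence $\Sigma^\natural\proves \at{p}^\natural$, as required.

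The main subtlety — and the step to be careful about — is the interaction between $-^\natural$ and the rules: one must verify that $-^\natural$ is genuinely a rule-by-rule homomorphism from $\base{N}$ into $\calculus{NJ}$, which relies on the fact that the auxiliary atoms $\chi^\flat$ appearing as conclusions in $\ern{\lor}^\flat$ (and similarly the $\phi^\flat,\psi^\flat$ throughout) are, after applying $-^\natural$, precisely the subformulae $\chi,\phi,\psi$ of the original sequent, and that $-^\flat$ was chosen injectively so $-^\natural$ is unambiguous on all atoms occurring in $\mathcal{D}$. A small point to note is that $\base{N}$ contains $\ern{\bot}^\flat$ with premiss $\bot^\flat$; since $\bot$ is not an atom, $\bot^\flat$ is a fresh atom and $(\bot^\flat)^\natural = \bot$, so this rule maps to $\ern{\bot}$ — one should make sure the formula $\bot$ is treated consistently (the text's $-^\flat$ is defined on formulae occurring in the sequent, and $\bot$ as a subformula is handled by the "$\phi\notin\setAtoms$" clause). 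Apart from bookkeeping of discharge and freshness, no genuine difficulty arises; the lemma is essentially the statement that the flat-rules of $\base{N}$ are a faithful atomic encoding of $\calculus{NJ}$, read in the "decoding" direction.
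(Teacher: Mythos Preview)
The paper does not actually supply its own proof of this lemma: it is listed as one of three properties that Sandqvist~\cite{Sandqvist2015IL} establishes, with only a one-line gloss (``This property is the simulation statement. It allows us to make the final move from derivability in $\base{N}$ to derivability in $\calculus{NJ}$.''). So there is no in-paper argument to compare against beyond that description.

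Your proposal is correct and is exactly the intended argument behind the ``simulation'' gloss. Reducing $\Sigma \supp_\base{N} \at{p}$ to $\Sigma \proves_\base{N} \at{p}$ via Lemma~\ref{lem:basicocmpleteness}, and then translating the $\base{N}$-derivation node-by-node under $-^\natural$ into an $\calculus{NJ}$-derivation, is the standard proof (and is how Sandqvist does it). Your observation that each rule of Figure~\ref{fig:baseN} is the $-^\flat$-image of the corresponding rule of Figure~\ref{fig:nj}, with identical discharge structure, is precisely what makes $-^\natural$ a rule-by-rule homomorphism. The bookkeeping points you raise---injectivity of $-^\flat$ so that $-^\natural$ is well-defined on every atom occurring in the derivation, and the handling of $\bot^\flat$ as a fresh atom with $(\bot^\flat)^\natural = \bot$---are the right ones, and none of them presents a genuine obstacle. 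One tiny addendum: for atoms in $\Sigma$ that happen not to lie in the range of $-^\flat$ (the lemma is stated for arbitrary $\Sigma \subseteq \setAtoms$), such atoms can only occur as inert undischarged leaves, so extending $-^\natural$ by the identity on them is harmless; in the application within the completeness proof one has $\Sigma = \Gamma^\flat$ anyway, so the issue does not arise.
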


This property is the simulation statement. It allows us to make the final move from derivability in $\base{N}$ to derivability in $\calculus{NJ}$.

These lemmas collectively suffice for completeness:

\begin{proof} Theorem~\ref{thm:sandqvist} --- Completeness. If $\Gamma\supp \chi$, then $\Gamma^\flat \supp_\base{N} \chi^\flat$ because if $\base{N}' \supseteq \base{N}$ and $\emptyset \supp_{\base{N}'} \phi^\flat$ for $\phi^\flat \in \Gamma^\flat$, then (by Lemma~\ref{lem:flatequivalence}) $\emptyset \supp_{\base{N}'} \phi$ for every $\phi \in \Gamma$. Hence, $\emptyset \supp_{\base{N}'} \chi$ (since $\Gamma \supp \chi$); whence (by Lemma~\ref{lem:flatequivalence}) $\emptyset \supp_{\base{N}'} \chi^\flat$; whence (by Lemma~\ref{lem:basicocmpleteness}) it follows that $\Gamma^\flat \supp_{\base{N}} \chi^\flat$. Thus (by Lemma~\ref{lem:sharpening}) it follows that $\Gamma \proves \chi$.
\end{proof} 

In the next section, we show that the completeness follows intuitively from regarding $\base{N}$ as a program capturing the inferential content of \calculus{NJ}. In general, a base may be regarded as a program, so that the application of a rule in the base corresponds to the use of a clause in the program. We demonstrate that the validity of a formula $\phi$ in the base $\base{N}$ emulates the execution of a goal $\phi^\flat$ relative to the program $\base{N}$. By construction of $\base{N}$, such executions simulate the construction of an $\calculus{NJ}$ proof of $\phi$. Hence, IPL is complete with respect to the B-eS. 

\section{Definite Formulae, Proof-search, and Completeness} \label{sec:LP}

There is an intuitive encoding of atomic rules as formulae. More precisely, as \emph{definite} formulae. Under this encoding, the bases which deliver B-eS live within the hereditary Harrop fragment of IPL. The latter has a simple operational reading via proof-search for uniform proofs (see Section~\ref{sec:hHLP}) that enables a proof-theoretic denotational semantics --- the least fixed point construction. We use this well-understood phenomenon to deliver the completeness of IPL with respect to Sandqvist's B-eS~\cite{Sandqvist2015IL} --- see Section~\ref{sec:bes}. 

Doing this reveals a subtle interpretation of the meaning of negation in terms of the negation-as-failure protocol. A reductive logic view of the denial of a formula is the failure to find a proof of it. Thus, according to the view of B-eS arising from the account passing through the operational reading of definite formulae, in B-eS denial is conceptionally prior to negation and both require equal consideration.
 
\subsection{Atomic Systems vs. Programs} \label{sec:atomicprograms}

Intuitively, atomic systems in B-eS are definitional in precisely the same way as programs in hHLP are definitional. To illustrate this, we must systematically move between them, which we do by encoding atomic systems as programs. 

Let $\lfloor - \rfloor$ be as follows:
\begin{itemize}
    \item[-] The encoding of zeroth-level rule is as follows:
    \[
   \left\lfloor 
    \raisebox{-1ex}{
    \infer{\,\at{c}\,}{\,}
       }
        \right\rfloor
        := \at{c}
    \]
    \item[-] The encoding of a first-level rule is as follows:
    \[
    \left\lfloor 
    \raisebox{-1ex}{
    \infer{\,\at{c}\,}{\at{p}_1  & \hdots & \at{p}_n} 
    }
        \right\rfloor
    :=
    (\at{p}_1 \land \hdots \land \at{p}_n) \to \at{c}
    \]
    \item[-] The encoding of an $n$th-level rule is as follows:
    \[
\left\lfloor
\raisebox{-1em}{
\infer{\,\at{c}\,}{\deduce{\at{p_1}}{[\Sigma_1]} \, \hdots \, \deduce{\at{p}_n}{[\Sigma_n]}} 
}
\right\rfloor :=
\big((\lfloor \Sigma_1 \rfloor \to \at{p}_1) \, \land \, \hdots \, \land \, (\lfloor \Sigma_n \rfloor \to \at{p}_n)\big) \to \at{c}
\]
\end{itemize}

The hierarchy of atomic system provided by Piecha and Schroeder-Heister~\cite{Schroeder2016atomic,piecha2017definitional} (Definition~\ref{def:atomicrule}) precisely corresponds to the inductive depth of the grammar for hereditary Harrop formulae --- that is, if $\base{A}$ is an $n$-th level atomic system, then
\[
 \proves_\base{A} \at{p} \qquad \mbox{ iff } \qquad \lfloor \base{A} \rfloor \proves \at{p}
\]
Therefore, we may suppress the encoding function, and henceforth use atomic systems and programs interchangeably.

Of course, in the Sanqvist basis, we are limited to properly second-level atomic systems, but the grammar of definite clauses can handle considerably more. Indeed, the work below suggests that completeness holds for $n$th-level atomic systems for $n\geq2$. 

Formally, to say that bases are definitional in the sense of programs, we mean the following:
\[
\supp_\base{B} \phi \qquad \text{ iff } \qquad \base{N}\cup \base{B} \proves \phi^\flat \tag{$\ast$}
\]
We assume for this equivalence that $-^\flat$ is sensitive to the presence of $\base{B}$ so that $\phi^\flat$ does not occur in $\base{B}$ for $\phi \neq \setAtoms$. That we use $\phi^\flat$ rather than $\phi$ in the $(\ast)$ is essentially. It is certainly \emph{not} the case that bases behave exactly as contexts; that is, we do \emph{not} have the following equivalence:
\[
\supp_\base{B} \phi \qquad \text{ iff } \qquad \base{B} \proves \phi \tag{$\ast\ast$}
\]

That this generalisation fails is shown by the following counter-example.

\begin{example} \label{ex:counter}
Consider the following formula:
\[
\phi:= (\at{a} \to \at{b} \lor \at{c}) \to \big((\at{a}\to \at{b}) \lor (\at{a} \to \at{c}) \big)
\]
The formula $\phi$ is not a consequence of IPL; hence, by completeness of IPL with respect to the B-eS, $\supp_\base{B} (\at{a} \to \at{b} \lor \at{c})$ and $\not \supp_\base{B}(\at{a}\to \at{b}) \lor (\at{a} \to \at{c})$, for some $\base{B}$. However, assuming $(\ast\ast)$, we have the following:
\[
\begin{array}{lclr}
\supp_\base{B} \at{a} \to \at{b} \lor \at{c} & \text{ implies } & \base{B} \proves \at{a} \to \at{b} \lor \at{c} & (\ast\ast) \\
  & \text{ implies } & \base{B} \cup \{\at{a}\} \proves \at{b} \lor \at{c} & (\rn{LOAD}) \\
    & \text{ implies } & \base{B} \cup \{\at{a}\} \proves \at{b} \text{ or }  \base{B} \cup \{\at{a}\} \proves \at{c} & (\rn{OR}) \\
      & \text{ implies } & \base{B}  \proves \at{a} \to \at{b} \text{ or }  \base{B} \proves \at{a} \to \at{c} & (\rn{LOAD}) \\
      & \text{ implies } & \base{B}  \proves (\at{a} \to \at{b})\lor(\at{a} \to \at{c}) &(\rn{OR}) \\
      & \text{ implies } & \supp_\base{B} (\at{a} \to \at{b})\lor( \at{a} \to \at{c}) & (\ast\ast)
\end{array}
\]
That is,
$\supp_\base{B} (\at{a} \to \at{b} \lor \at{c})$ implies $\supp_\base{B}(\at{a}\to \at{b}) \lor (\at{a} \to \at{c})$, for any $\base{B}$. This is a contradiction, therefore $(\ast\ast)$ fails.
\end{example}

In the next section, we use the relationship between atomic systems and programs to prove completeness of IPL with respect to the B-eS.

\subsection{Completeness of IPL via Logic Programming} \label{sec:compviaLP}
We may prove completeness of IPL with respect to the B-eS by passing through hHLP as follows:
\[
\xymatrix{
T^\omega I_\bot, \base{N} \entails \phi^\flat \ar@{<->}[r] &  \base{N} \proves \phi^\flat \ar[d] \\
 \supp_\base{N} \phi \ar[u] &  \proves \phi 
}
\]
The diagram requires three claims, the middle one of which is Lemma~\ref{lem:miller}. The other two are Lemma~\ref{lem:emulation} and Lemma~\ref{lem:simulation}, respectively, reading in the direction of the arrows.

The intuition of the completeness argument is two-fold: firstly, that $\base{N}$ is to $\phi^\flat$ as $\calculus{NJ}$ is to $\phi$; secondly, the use of a rule in a base corresponds to the use of a clause in the corresponding program; thirdly, execution in $\base{N}$ corresponds to proof(-search) in $\calculus{NJ}$.  In this set-up, the $T^\omega$ construction captures the construction of a proof: the application of a rule corresponds to a use of $T$, the iterative application of rules corresponds to the iterative application of $T$ --- that is, to $T^\omega$.

It remains to prove the claims and completeness. 

\begin{lemma}[Emulation] \label{lem:emulation}
 If $\supp_\base{N} \phi$, then $T^\omega I_\bot, \base{N} \entails \phi^\flat$.
\end{lemma}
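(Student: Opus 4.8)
The plan is to route the support judgement $\supp_\base{N}\phi$ through its atomic ``shadow'' $\phi^\flat$ and then invoke the apparatus already assembled. The point of the detour is that $\phi^\flat$ is an \emph{atom}: once a support claim about the possibly compound formula $\phi$ has been traded for one about $\phi^\flat$, we are exactly in the basic situation governed by Lemma~\ref{lem:basicocmpleteness}, after which a couple of translations deposit us in the denotational semantics of hHLP.

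Concretely, I would argue in four moves. First, since $\phi$ occurs in the sequent $\Gamma\seq\gamma$ for which $\base{N}$ was built, Lemma~\ref{lem:flatequivalence} (taken at $\base{N}'=\base{N}$) turns $\supp_\base{N}\phi$ into $\supp_\base{N}\phi^\flat$. Second, as $\phi^\flat\in\setAtoms$ and $\base{N}\in\basis{S}$ --- $\base{N}$ being, by construction, a finite properly second-level Sandqvist base --- Lemma~\ref{lem:basicocmpleteness} with empty context yields $\proves_\base{N}\phi^\flat$, i.e.\ an $\base{N}$-derivation of $\phi^\flat$. Third, $\base{N}$ is a second-level atomic system, so the correspondence recorded in Section~\ref{sec:atomicprograms} between the atomic-rule hierarchy and the hereditary Harrop grammar applies: suppressing the encoding $\lfloor-\rfloor$, $\proves_\base{N}\phi^\flat$ is equivalent to the operational judgement $\base{N}\proves\phi^\flat$ of hHLP, with $\phi^\flat$ a legitimate goal because it is atomic. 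Fourth, Lemma~\ref{lem:miller} rewrites this operational judgement as the denotational one $T^\omega I_\bot,\base{N}\entails\phi^\flat$. Composing the four steps gives the lemma; indeed each step is a biconditional, so $\supp_\base{N}\phi$ and $T^\omega I_\bot,\base{N}\entails\phi^\flat$ are in fact equivalent, although only the stated direction is wanted here.

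The genuinely load-bearing ingredient is Lemma~\ref{lem:flatequivalence}, the step that bridges the compound formula $\phi$ and its atom $\phi^\flat$ --- and it is already available. Granting it, the only place I would tread carefully is the passage between derivability $\proves_\base{N}$ in the atomic system and the operational semantics $\proves$ of Figure~\ref{fig:os}: one should check that the encoding of $\base{N}$ genuinely lands in the hereditary Harrop fragment and that the $\rn{EFQ}$ clause creates no discrepancy. This is harmless: $\base{N}$ has only atomic conclusions, so its encoding contains no clause with head $\bot$, whence $\rn{EFQ}$ is never applicable; and the atom $\bot^\flat$ that $\base{N}$ uses in emulating $\bot$-elimination is, from the hHLP side, an ordinary atom whose associated clause $\bot^\flat\to\phi^\flat$ is an ordinary definite formula. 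Everything else is bookkeeping over Lemmas~\ref{lem:flatequivalence}, \ref{lem:basicocmpleteness}, and~\ref{lem:miller}; the conceptual payoff --- that $\supp_\base{N}\phi$ \emph{is} execution of the goal $\phi^\flat$ against the program $\base{N}$ --- is precisely what will make the subsequent simulation step (Lemma~\ref{lem:simulation}) run.
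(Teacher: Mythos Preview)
Your argument is correct, but it follows a different route from the paper's. You treat Lemma~\ref{lem:flatequivalence} as a black box and compose it with Lemma~\ref{lem:basicocmpleteness}, the atomic-system/program correspondence of Section~\ref{sec:atomicprograms}, and Lemma~\ref{lem:miller}; this delivers the result in four lines, and indeed as an equivalence. The paper instead gives a self-contained structural induction on the support clauses of Figure~\ref{fig:support}, proving the strengthened statement that for every $\base{N}'\supseteq\base{N}$, $\supp_{\base{N}'}\phi$ implies $T^\omega I_\bot,\base{N}'\entails\phi^\flat$, and along the way establishes by a sub-induction the auxiliary fact $\supp_{\base{N}'\cup\{\phi^\flat\}}\phi$ --- essentially re-deriving one half of Lemma~\ref{lem:flatequivalence} internally rather than citing it. Your approach is shorter and makes clear that the lemma is a corollary of Sandqvist's results plus Miller's; the paper's approach is the intended contribution of the section, since the text explicitly says Lemma~\ref{lem:flatequivalence} is ``the crucial step'' that Section~\ref{sec:compviaLP} will \emph{study} through the operational account, not merely invoke. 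If one is permitted to cite Lemma~\ref{lem:flatequivalence}, the detour through hHLP contributes nothing new to the completeness argument; it is the paper's direct clause-by-clause induction that earns the claim that support in $\base{N}$ \emph{is} execution against the program $\base{N}$, by exhibiting how each support clause corresponds to an unfolding via $T$.
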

\begin{proof}
We prove a stronger proposition: for any $\base{N}' \supseteq \base{N}$, if $\supp_{\base{N}'} \phi$, then $T^\omega I_\bot, \base{N}' \entails \phi^\flat$. We proceed by induction on support in a base according to the various cases of Figure~\ref{fig:support}, although for the sake of economy we combine the clauses $\Rightarrow$ and $\to$.
\begin{itemize}
    \item[-] $\phi \in \setAtoms$. Note $\phi^\flat = \phi$, by definition. Therefore, if $\supp_{\base{N}'} \phi$, then $\proves_{\base{N}'} \phi$, but this is precisely emulated by application of $T$. Hence, $T^\omega I_\bot, {\base{N}'}  \sat \phi$.
    
    \item[-] $\phi = \bot$. If $\supp_{\base{N}'} \bot$, then  $\supp_{\base{N}'} \at{p}$, for every $\at{p} \in \setAtoms$. By the induction hypothesis (IH), $T^\omega I_\bot, {\base{N}'} \sat \at{p}$ for every $\at{p} \in \setAtoms$. It follows that $T^\omega I_\bot, {\base{N}'} \sat \bot^\flat$.
    
    \item[-] $\phi := \phi_1 \land \phi_2$. By the $\land$-clause for support, $\supp_{\base{N}'} \phi_1$ and $\supp_{\base{N}'} \phi_2$. Hence, by the IH, $T^\omega I_\bot, {\base{N}'} \sat \phi_1$ and $T^\omega I_\bot, {\base{N}'} \sat \phi_2$. The result follows by $\land$-clause for satisfaction. 
    
      \item[-] $\phi := \phi_1 \lor \phi_2$.  By the IH, $\phi_1 \supp_{\base{N'}} \phi_1^\flat$  and $\phi_2 \supp_{\base{N'}} \phi_2^\flat$. By the $\irn{\lor}$-scheme in $\base{N}'$, both $\phi_1^\flat \supp (\phi_1\lor \phi_2)^\flat $ and $\phi_2^\flat \supp (\phi_1\lor \phi_2)^\flat $. By $\Rightarrow$-clause for support, we have $\phi_1 \supp_{\base{N'}} (\phi_1 \lor \phi_2)^\flat$  and $\phi_2 \supp_{\base{N'}} (\phi_1 \lor \phi_2)^\flat$. Since $\supp_{\base{N}'} \phi_1 \lor \phi_2$, it follows from $\lor$-clause for support that $\supp_{\base{N}'} (\phi_1 \lor \phi_2)^\flat$. That is, $(\phi_1 \lor \phi_2)^\flat \in T^\omega, \base{N}' \sat (\phi_1 \lor \phi_2)^\flat$, as required.
      
      \item[-] $\phi:=\phi_1 \to \phi_2$. We first prove the following auxiliary proposition: for any $\phi$, the judgement $\supp_{{\base{N}'} \cup \{ \phi^\flat \}} \phi$ obtains. We proceed by sub-induction on support in a base according to the various cases of Figure~\ref{fig:support}. As above, for the sake of economy we combine the clauses $\Rightarrow$ and $\to$.
    \begin{itemize}
    \item[-]  $\phi \in \setAtoms$.  The result is immediate since $\supp_{{\base{N}'} \cup \{ \phi^\flat \}} \phi$ iff ${\base{N}'} \cup \{ \phi^\flat \} \proves \phi$ and the latter obtains by \rn{IN}.
        \item[-] $\phi = \bot$. By $\bot^\flat$-scheme, $\proves_{{\base{N}'} \cup \{ \phi^\flat \}} \at{p}$ for any $\at{p} \in \set{A}$. That is, $\supp_{{\base{N}'} \cup \{ \phi^\flat \}} \at{p}$ for any $\at{p} \in \set{A}$. Thus, $\supp_{{\base{N}'} \cup \{ \phi^\flat \}} \bot$, as required.
            \item[-] $\phi = \phi_1 \land \phi_2$. By the sub-induction hypothesis (sub-IH), $\supp_{{\base{N}'} \cup \{\phi_1^\flat\}} \phi_1$ and $\supp_{{\base{N}'} \cup \{\phi_2^\flat\}} \phi_2$ obtain. By Lemma~\ref{lem:extensionisconservative}, therefore  $\supp_{{\base{N}'}\cup\{\phi_1^\flat,\phi_2^\flat\}} \phi_1$ and $\supp_{{\base{N}'}\cup\{\phi_1^\flat,\phi_2^\flat\}} \phi_2$ obtain. By Definition~\ref{def:bvalid}, we have $\supp_{{\base{N}'}\cup\{\phi_1^\flat,\phi_2^\flat\}} \phi_1 \land \phi_2$. By $\irn{\land}^\flat$- and $\ern{\land}^\flat$-schemes, $\supp_{{\base{N}'}\cup\{(\phi_1\land\phi_2)^\flat\}} \phi_1 \land \phi_2$.
    \item[-] $\phi = \phi_1 \lor \phi_2$. By the sub-IH, both $\supp_{{\base{N}'}\cup\{\phi_1^\flat\}} \phi_1$ and $\supp_{{\base{N}'}\cup\{\phi_2^\flat\}} \phi_2$ obtain.  By $\irn{\lor}^\flat$- and $\ern{\lor}^\flat$-schemes, $\supp_{{\base{N}'}\cup\{(\phi_1\lor\phi_2)^\flat\}} \phi_1$ and $\supp_{{\base{N}'}\cup\{(\phi_1\lor\phi_2)^\flat\}} \phi_2$. Therefore, $\supp_{{\base{N}'}\cup\{\phi_1^\flat,\phi_2^\flat\}} \phi_1 \lor \phi_2$ obtains.
    \item[-] $\phi = \phi_1 \to\phi_2$. 
    By the IH, if $\base{C}$ is such that $\supp_{{\base{N}'}\cup\{\phi_1^\flat \to \phi_2^\flat\}\cup \base{C}} \phi_1$, then ${\base{N}'}\cup\{(\phi_1^\flat \to \phi_2^\flat)\}\cup \base{C} \proves \phi_1^\flat$.  By the sub-IH, both $\supp_{{\base{N}'}\cup\{\phi_1^\flat\}} \phi_1$ and $\supp_{{\base{N}'}\cup\{\phi_2^\flat\}} \phi_2$ obtain. Hence, for any $\base{C} \supseteq \base{B}$, if $\supp_{{\base{N}'}\cup\{\phi_1^\flat\to \phi_2^\flat\}\cup \base{C}} \phi_1$, then $\supp_{{\base{N}'}\cup\{\phi_1^\flat\to \phi_2^\flat\}\cup \base{C}} \phi_2$. By Definition~\ref{def:bvalid}, we have $\supp_{{\base{N}'}\cup\{\phi_1^\flat\to \phi_2^\flat\}} \phi_1 \to \phi_2$. By $\irn{\to}^\flat$- and $\ern{\to}^\flat$-schemes, $\supp_{{\base{N}'}\cup\{(\phi_1\to \phi_2)^\flat\}} \phi_1 \to \phi_2$.
\end{itemize}
This completes the sub-induction. It remains to consider the case for $\to$-clause for the main induction. 

      By the $\to$-clause for satisfaction, $\phi_1 \supp_\base{N} \phi_2$. So, by the $\Rightarrow$-clause for satisfaction, $\supp_{\base{N}'}\phi_1$ implies $\supp_{\base{N}'} \phi_2$  for any $\base{N}' \supseteq \base{N}$. In particular, let $\base{N}':= \base{N} \cup \{\phi_1^\flat\}$. Since $\supp_{\base{N}'}\phi_1$ obtains by the sub-induction, we have $\supp_{\base{N}'} \phi_2$. By the IH, $T^\omega I_\bot, \base{N}\cup\{\phi_1^\flat\} \sat \phi_2^\flat$. Hence, $T^\omega I_\bot, \base{N} \sat \phi_1^\flat \to \phi_2^\flat$. By construction of $\base{N}$, we have $(\phi_1^\flat \to \phi_2^\flat)\to(\phi_1\to\phi_2)^\flat \in \base{N}$. Therefore, by definition of $T$, we have $(\phi_1 \to \phi_2)^\flat \in T(T^\omega I_\bot)(\base{N})$. Whence, $T^\omega I_\bot \base{N} \proves (\phi_1 \to \phi_2)^\flat$, as required.
\end{itemize}

This completes the induction.
\end{proof}

\begin{lemma}[Simulation] \label{lem:simulation}
 If $\base{N} \proves \phi^\flat$, then $\proves \phi$.
\end{lemma}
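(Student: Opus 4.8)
The plan is to read an hHLP-execution of the query $\base{N} \seq \phi^\flat$ directly off as an $\calculus{NJ}$-derivation of $\phi$, using the fact that, under the encoding $\lfloor - \rfloor$, the clauses of $\base{N}$ (Figure~\ref{fig:baseN}) are precisely the rules of $\calculus{NJ}$ transcribed onto the flat atoms. The first step is a structural observation about such executions: starting from the program $\base{N}$ and an atomic goal $\chi^\flat$ with $\chi$ occurring in $\Gamma \seq \gamma$, every program reached has the form $\base{N} \cup \Delta^\flat$ for a finite set $\Delta$ of formulae occurring in $\Gamma \seq \gamma$ (the only definite formulae ever loaded by $\rn{LOAD}$ are the atomic hypotheses $\chi^\flat$ that appear, with atomic antecedent, in the bodies of $\base{N}$-clauses), and every goal reached is built from such atoms using only $\land$ and atomic-antecedent $\to$ --- no body of an $\base{N}$-clause contains a disjunction. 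Consequently $\rn{OR}$ never fires; and $\rn{EFQ}$ never fires either, since the falsum-behaviour of $\base{N}$ is carried entirely by the distinguished atom $\bot^\flat$ via the clause $\lfloor \ern{\bot}^\flat \rfloor$, while $\bot$ itself heads no clause, is never loaded, and executions are well-founded. So only $\rn{IN}$, $\rn{CLAUSE}$, $\rn{AND}$ and $\rn{LOAD}$ are in play.

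Granting this, I would prove the following strengthening by induction on the height of the execution: for every finite set $\Delta$ of formulae occurring in $\Gamma \seq \gamma$ and every goal $G$ all of whose atoms have the form $\chi^\flat$ with $\chi$ occurring in $\Gamma \seq \gamma$, if $\base{N} \cup \Delta^\flat \proves G$ then $\Delta \proves G^\natural$, where $G^\natural$ replaces each $\chi^\flat$ by $\chi$ and commutes with $\land$ and $\to$ (so $(\bot^\flat)^\natural = \bot$). The lemma is the instance $\Delta = \emptyset$, $G = \phi^\flat$, since then $G^\natural = \phi$. The cases on the last operational rule are routine and each matches a rule of $\calculus{NJ}$: $\rn{AND}$ gives $\irn{\land}$; $\rn{LOAD}$ gives $\irn{\to}$, discharging the loaded atom; in the $\rn{IN}$ case the atomic goal must lie in $\Delta^\flat$ (because $\base{N}$, having no zeroth-level rule, contains no atomic clause, and $[\base{N} \cup \Delta^\flat] = \base{N} \cup \Delta^\flat$), so the assumption rule applies. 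The one case with content is $\rn{CLAUSE}$: the goal is an atom $q$, some $\lfloor r \rfloor = (G' \to q) \in \base{N}$, and $\base{N} \cup \Delta^\flat \proves G'$; splitting on which $\calculus{NJ}$-rule $r$ transcribes, one peels off the leading $\rn{AND}$s (and, for $\ern{\lor}^\flat$, the two $\rn{LOAD}$s), applies the induction hypothesis to the pieces of $G'$, and assembles $\irn{\land}$, $\ern{\land}$, $\irn{\lor}$, $\ern{\lor}$ (discharging the two case-hypotheses), $\ern{\to}$ (modus ponens), or $\ern{\bot}$ to conclude $\Delta \proves q^\natural$; the $\ern{\bot}^\flat$ case is exactly where the induction hypothesis at the atom $\bot^\flat$ is used.

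The main obstacle is not any individual case but the bookkeeping that keeps the case analysis honest: one must be sure the execution never escapes the ``flat'' fragment, so that the last-rule analysis is genuinely exhaustive and each operational rule has a unique $\calculus{NJ}$ counterpart. That is precisely the structural observation of the first paragraph, and it holds by construction of $\base{N}$. Equivalently --- and perhaps more in the spirit of the surrounding section --- one can run the same induction over the approximants $T^k I_\bot$ of the least fixed point, using Lemma~\ref{lem:miller} to replace $\base{N} \proves \phi^\flat$ by $\phi^\flat \in T^\omega I_\bot(\base{N})$, hence by $\phi^\flat \in T^k I_\bot(\base{N})$ for some least $k$; the induction is then on $k$ together with the derivation witnessing membership, with the same cases.
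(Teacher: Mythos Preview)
Your proposal is correct and follows essentially the same route as the paper: both strengthen the statement to ``if $\base{N}\cup\Delta^\flat \proves \phi^\flat$ then $\Delta \proves \phi$'' and argue by induction on the length of the hHLP-execution, with the $\rn{CLAUSE}$ case dispatched by the $\calculus{NJ}$-rule that the fired clause transcribes. Your version is simply more explicit about the bookkeeping --- the general-goal invariant, the invariant shape $\base{N}\cup\Delta^\flat$ of the program, and the exclusion of $\rn{OR}$ and $\rn{EFQ}$ --- whereas the paper leaves these implicit and inlines the $\rn{AND}$/$\rn{LOAD}$ decomposition of clause bodies into the $\rn{CLAUSE}$ step.
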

\begin{proof}
We proceed by induction on the length of execution. A more tractable induction invariant is the following: if $\base{N}\cup\Gamma^\flat \proves \phi^\flat$, then $\Gamma \proves \phi$. Intuitively, the execution of $\base{N} \cup \Gamma^\flat  \proves \phi^\flat$ simulates the reductive construction of a proof of $\phi$ from $\Gamma$ in $\calculus{NJ}$ --- that is, a proof-search. We proceed by induction on the length of the execution.

 \textsc{Base Case:} It must be that $\phi \in \Gamma$, so $\Gamma \proves \phi$ is immediate.
 
 \textsc{Inductive Step:} By construction of $\base{N}$, the execution concludes by \textsf{CLAUSE} applied to a definite clause $\rho$ simulating a rule $\at{r} \in \calculus{NJ}$; that is, $\base{N}\cup \Gamma^\flat \proves \psi_i^\flat$ for $\psi_i$ such that $\psi_1^\flat \land .... \land \psi_n^\flat \to \phi^\flat$. By the induction hypothesis (IH), $\Gamma \proves \psi_i$ for $1 \leq i \leq n$. It follows that $\Gamma \proves \phi$ by applying $r \in \calculus{NJ}$.
 
 For example, if the execution concludes by \textsf{CLAUSE} applied to the clause for $\land$-introduction (i.e., $\phi^\flat \land \psi^\flat \to (\phi\land\psi)^\flat$), then the trace is as follows:
 \[
\infer{\base{N} \proves (\phi \land \psi)^\flat}{\infer{\base{N} \proves \phi^\flat \land \psi^\flat}{\deduce{\base{N} \proves \phi^\flat}{\vdots} & \deduce{\base{N}\proves \psi^\flat}{\vdots}}} 
 \]
 By the induction hypothesis, we have proofs witnessing $\proves \phi$ and $\proves \psi$, and by $\land$-introduction:
 \[
 \infer{\phi\land \psi}{\deduce{\phi}{\vdots} & \deduce{\psi}{\vdots}}
 \]
 This completes the induction.

\end{proof}

Following the diagram, we have the completeness of IPL with respect to the B-eS:

\begin{proof} Theorem~\ref{thm:sandqvist} --- Completeness.  By definition, if $\supp \phi$, then $\supp_\base{N} \phi$. Hence, by Lemma~\ref{lem:emulation}, it follows that $T^\omega I_\bot, \base{N} \sat \phi^\flat$. By Lemma~\ref{lem:miller} $\base{N} \proves \phi^\flat$. Thus, by Lemma~\ref{lem:simulation}, $\proves \phi$, as required.
\end{proof}

In the following section, we discuss how reductive logic delivers the completeness proof above and the essential role played by both proofs and refutations.

\subsection{Negation-as-Failure} \label{sec:NAF}

A reduction in a proof system is constructed co-recursively by applying the rules of inference backwards. Even though each step corresponds to the application of a rule, the reduction can fail to be a proof as the computation arrives at an irreducible sequent that is not an instance of an axiom in the logic. For example, in \calculus{hHLP}, one may compute the following:
\[
\infer[\Uparrow]{
    \infer[\Uparrow]{\emptyset \seq \at{p} \to (\at{p} \lor \at{q})}{
        \at{p} \seq \at{p} \lor \at{q}}
}{\at{p} \seq \at{q}}
\]
This reduction fails to be a proof, despite every step being a valid inference, since the initial sequent is not an instance of $\rn{IN}$ or $\rn{ABSURD}$. In reductive logic, such failed attempts at constructing proofs are not meaningless. Pym and Ritter~\cite{pym2004reductive} have provided a semantics of the reductive logic of IPL in which such reductions are given meaning by using hypothetical rules; that is, the construction would succeed in the presence of the following rule:
\[
\infer{\at{q}}{\at{p}}
\]
 The categorical treatment of this semantics has them as \emph{indeterminates} in a polynomial category --- this adumbrates current work by Pym et al.~\cite{Pym2022catpts}, who have shown that the B-eS is entirely natural from the perspective of categorical logic. The use of such additional rules to give semantics to constructions that are not proofs directly corresponds to the use of atomic systems in the B-eS for IPL; for example, let $\base{A}$ be the atomic system containing the rule above, then the judgement $p \supp_\base{A} q$ obtains. Altogether, this suggests a close relationship between B-eS and reductive logic. We may review the meaning of absurdity ($\bot$) from this perspective.
 
 There is no introduction rule for $\bot$ in \calculus{NJ}. One may not construct a proof of absurdity without it already being, \emph{in some sense}, assumed; for example, $\phi, \phi \to \bot \proves \bot$ obtains because the context $\{\phi, \phi \to \bot\}$ is already, in some sense, absurd. We may use B-eS and LP to understand what that sense is. The judgement $\Gamma \proves \bot$ is equivalent to $ \proves \phi \to \bot$ for some formula $\phi$. Therefore, we may restrict attention to negations of this kind to understand the meaning of absurdity. 
 
 Using the work of Section~\ref{sec:compviaLP}, the judgement $\supp \neg \phi$ obtains iff  $T^\omega I_\bot, \base{N} \proves (\neg\phi)^\flat$. Unfolding the semantics, this is equivalent to $T^\omega I_\bot, \base{N} \cup \{\phi^\flat\} \proves \bot$. Thus, the sense in which $\phi$ is absurd is that its interpretation under $T^\omega I_\bot$ contains an absurdity; that is, $\phi$ is absurd iff $\bot \in T^\omega I_\bot(\phi)$. What does this tell us about the meaning of $\neg\phi$? We are passing through the following equivalence --- see $(\ast)$ in Section~\ref{sec:atomicprograms}:
\[
\supp_\base{B} \bot \qquad \text{ iff } \qquad \base{N}\cup \base{B} \proves \bot^\flat
\]
Recall that $\base{B}$ is finite in this setting. Hence, according to the LP perspective, what we mean by a base supporting absurdity is that it proves $\bot^\flat$. In this way, we introduce negation at the level of atomic propositions. That is, we may have have a base $\base{B}$ containing the following rules in which $\at{p}$ and $\bar{\at{p}}$ are both atoms:
\[
\infer{\,\bot^\flat\,}{\,\at{p} & \bar{\at{p}}\,}
\]
In this case, the inferential behaviour of $\at{p}$ and $\bar{\at{p}}$ is that they are contradictory propositions: together, they infer absurdity. Essentially, following the construction of $\base{N}$ in Section~\ref{sec:compviaLP}, we have $\at{p}= \phi^\flat$ and $\bar{\at{p}} = (\phi\to\bot)^\flat$, for some $\phi$.

This view of negation is in contrast to the semantics, originally proposed by Dummett~\cite{dummett1991logical}, in which the proof-theoretic meaning of absurdity is that all propositional atoms are proved; that is, the definition in which $\bot$ is understood by the following `virtually infinite' rule:
\[
\infer{\,\bot\,}{\,\at{p}_1 & ... & \at{p}_n\,}
\]
K\"urbis~\cite{kurbis2019proof} observes that this leaves something to be desired. 

The case in which a base proves every atomic proposition is degenerate because it corresponds to having every proof be valid. In the non-degenerate case, we may simply choose $\bot^\flat$ to be an atom that does not appear in $\base{N}\cup\base{B}$. Thus, the proof-theoretic meaning of $\bot$ is the failure to find a proof of $\bot$ while not working in a degenerate program. 

It follows, by the clauses of Figure~\ref{fig:support}, that the meaning of $\neg \phi$ is that there is no proof of $\phi$ while not working in a degenerate program,
\[
\begin{array}{lcl}
     \supp_\base{B} \neg \phi & \mbox{iff} & \phi \supp_\base{B} \bot  \\
     & \mbox{iff} & \mbox{$\supp_\base{C} \phi$ implies $\supp_\base{C} \bot$ (for $\base{C} \supseteq \base{B}$)} \\
     & \mbox{iff} & \mbox{$ \base{N} \cup \base{C} \not \proves \phi^\flat$ (unless $\base{B}$ degenerate)}
\end{array}
\]

Thus, B-eS supports negation-as-failure. In particular, since $\base{N}$ simulates $\calculus{NJ}$, the failure actually refers to failure to find a proof in the natural deduction system for IPL, even under extension by atomic rules, and not merely to the failure of hHLP to find a proof.

Piecha and Schroeder-Heister~\cite{Schroeder2016atomic,piecha2017definitional} have argued that there are two perspectives on atomic systems: the knowledge view and the definitional view. This becomes clear according to various ways in which a program may be regarded in LP. The negation-as-failure protocol makes use of the definitional perspective; its analogue in terms of knowledge is the \emph{closed-world assumption}. In this case, a knowledge base treats everything that is not known to be valid as invalid. There is significant literature about the closed-world assumption that may be useful for understanding P-tS and what it tells us about reasoning --- see, for example, Clark~\cite{clark1978negation}, Reiter~\cite{reiter1981closed}, and Kowalski~\cite{Kowalski1986,kowalskinotes}, and Harland~\cite{harland1991hereditary,harland1993success}.

\section{Conclusion} \label{sec:conclusion}

Proof-theoretic semantics is the paradigm of meaning based on proof (as opposed to truth). Essential to this approach is the use of atomic systems, which give meaning to atomic propositions. Base-extension semantics is a particular instance of proof-theoretic semantics that proceeds by an inductively defined judgement whose base case is given by provability in an atomic system. It may be regarded as capturing the declarative content of proof-theoretic semantics in the Dummett-Prawitz tradition --- see Gheorghiu and Pym~\cite{ptvtobes}. Sandqvist~\cite{Sandqvist2005inferentialist} has given a base-extension semantics for intuitionistic propositional logic. Completeness follows by constructing a special bespoke base in which the validity of a complex proposition simulates a natural deduction proof of that formula.

In the base-extension semantics, the meaning of the logical constants is derived from the rules of $\calculus{NJ}$, while the atomic systems give the meaning of atomic propositions. These atomic systems, which include Sandqvist's special bases that delivers completeness, all sit within the hereditary Harrop fragment of IPL. The significance of this is that an effective operational reading of definite formulae renders them meaning-conferring in a sense analogous to the use of atomic systems. Moreover, this operational account coheres with the independently conceived notion of derivability in an atomic system. Of course, that atomic systems and programs are intimately related has been studied before --- see Schroeder-Heister and Halln\"as~\cite{hallnas1990proof,hallnas1991proof}. 

Significantly, the operational reading of the definite formulae allows from a simple proof-theoretic model-theoretic semantics that captures the idea of \emph{unfolding} the inferential content of a set of definite clauses or an atomic system.  In this paper, we have used the operational account of definite formulae to prove the completeness of intuitionistic propositional logic with respect to its base-extension semantics. The aforementioned special base is interpreted as a program so that completeness follows immediately from the existing completeness result of the model-theoretic semantics of the logic programming language. Doing this reveals the subtle meaning of negation in proof-theoretic semantics.

Historically, the negation of a formula is understood as the denial of the formula itself. This is indeed the case in the model-theoretic semantics of IPL --- see Kripke~\cite{kripke1965semantical}. Using the connection to logic programming in this paper, we see that in base-extension semantics, negation is defined by the failure for there to be a proof. Thus, denial is conceptionally prior to negation. In short, base-extension semantics consider the space of reductions, which is larger than the space of proofs, including failed searches. 
As illustrated above, the connection between logic programming and base-extension semantics is quite intuitive and useful. More specifically, the $T$ operator delivering the semantics of logic programming corresponds to the application of a rule in a proof system; hence, the $T^\omega$ construction is fundamental to proof-theoretic semantics. Since logic programming has been studied for various logics (see, for example, the treatment of BI in Gheorghiu et al.~\cite{Samsonschrift}), this suggests the possibility for uniform approaches to setting up base-extension semantics for logics by studying their proof-search behaviours. In particular, work by Harland~\cite{harland1991hereditary,harland1993success} on handling negation in logic programming may be used to address the difficulties posed by the connective --- see K\"urbis~\cite{kurbis2019proof}. 

It remains to investigate further the connection between proof-theoretic semantics and reductive logic, in general, and base-extension semantics and logic programming, in particular.

\Acknowledgements{We are grateful to Edmund Robinson for suggesting the formula in Example~\ref{ex:counter} and to the reviewers of an earlier version of the paper for their helpful comments and feedback.  }

\bibliographystyle{BSLbibstyle}
\bibliography{biblio}

\end{document}